\algnewcommand\And{\textbf{and}}
\algnewcommand\Or{\textbf{or}}
\algnewcommand\Not{\textbf{not}}
\algnewcommand\In{\textbf{in}}
\algnewcommand\Each{\textbf{each}}
\newcommand{\squishlist}{
 \begin{list}{$\bullet$}
  { \setlength{\itemsep}{0pt}
     \setlength{\parsep}{3pt}
     \setlength{\topsep}{3pt}
     \setlength{\partopsep}{0pt}
     \setlength{\leftmargin}{2.5em}
     \setlength{\labelwidth}{1em}
     \setlength{\labelsep}{0.5em} } }
\newcommand{\squishlisttwo}{
 \begin{list}{$\triangleright$}
  { \setlength{\itemsep}{0pt}
     \setlength{\parsep}{0pt}
    \setlength{\topsep}{0pt}
    \setlength{\partopsep}{0pt}
    \setlength{\leftmargin}{2em}
    \setlength{\labelwidth}{1.5em}
    \setlength{\labelsep}{0.5em} } }
\newcommand{\squishend}{
  \end{list}  }
\definecolor{verbgray}{gray}{0.9}
\definecolor{shadecolor}{rgb}{.91, .91, .91}
\definecolor{bordercolor}{rgb}{.8, .8, .6}
\definecolor{ultramarine}{rgb}{0, 0.125, 0.376}
 \definecolor{arsenic}{rgb}{0.23, 0.27, 0.29}
 \definecolor{beige}{rgb}{0.96, 0.96, 0.86}
\definecolor{amber}{rgb}{1.0, 0.75, 0.0}
\definecolor{orange}{rgb}{1.0, 0.49, 0.0}
\definecolor{dandelion}{rgb}{0.94, 0.88, 0.19}
  \definecolor{indiagreen}{rgb}{0.07, 0.53, 0.03}
  \definecolor{huntergreen}{rgb}{0.21, 0.37, 0.23}
\newcommand{\blue}[1] {\textcolor{blue}{#1}}
\newcommand{\red}[1] {\textcolor{red}{#1}}
\newcommand{\bblue}[1] {{\bf \textcolor{blue}{#1}}}
\newcommand{\rred}[1] {{\bf \textcolor{red}{#1}}}
\newcommand{\defo}[1] {\emph{\textcolor{blue}{#1}}}
\definecolor{shadecolor}{rgb}{.9, .9, .9}
    \newenvironment{frshaded*}{%
    \MakeFramed {\advance\hsize-\width \FrameRestore}}%
    {\endMakeFramed}
    \newcounter{examplecounter}
\newenvironment{exam}{
 \begin{frshaded*}
    \refstepcounter{examplecounter}%
    \noindent
  \textbf{Example \arabic{examplecounter}}%
  \quad
}{%
\end{frshaded*}
}
\newenvironment{frshaded2*}{%
    \MakeFramed {\advance\hsize-\width \FrameRestore}}%
    {\endMakeFramed}
\newenvironment{result}{
 \begin{frshaded2*}
}{%
\end{frshaded2*}

}
\newenvironment{frshaded3*}{%
    \MakeFramed {\advance\hsize-\width \FrameRestore}}%
    {\endMakeFramed}
\definecolor{winered}{rgb}{0.5,0.2,0}
\DeclareMathOperator{\firstone}{first1}
\DeclareMathOperator{\lastone}{last1}
\DeclareMathOperator{\firstzero}{first0}
\DeclareMathOperator{\lastzero}{last0}
\renewcommand{\tt} {\mathtt}
\newcommand{\fulltree} {\mathrm{T}}
\newcommand{\A} {\mathbf{A}}
\newcommand{\B} {\mathbf{B}}
\newcommand{\N} {\mathbf{N}}
\newcommand{\C} {\mathbf{C}}
\newcommand{\Set} {\mathbf{S}}
\newcommand{\OS} {\mathcal{OS}}
\newcommand{\cycletree} {\mathbb{T}}
\newcommand{\treeroot}[1] {\mathit{r}_n}
\newcommand{\concattree} {\mathcal{T}}
\DeclareMathOperator{\RCL}{RCL}
\DeclareMathOperator{\ap}{ap}
\DeclareMathOperator{\parent}{par}
\title{Construction of orientable sequences in $O(1)$-amortized time per bit}
\titlerunning{~~Orientable sequences}
\author{Daniel Gabri\'{c}}{University of Guelph, Canada}{}{}{}
\author{Joe Sawada}{University of Guelph, Canada}{}{}{}
\authorrunning{~}
\keywords{orientable sequence, de Bruijn sequence, concatenation tree, cycle-joining, universal cycle}
\begin{document}
\maketitle

\begin{abstract}
An orientable sequence of order $n$ is a cyclic binary sequence such that each length-$n$ substring appears at most once \emph{in either direction}.
Maximal length orientable sequences are known only for $n\leq 7$, and a trivial upper bound on their length is $2^{n-1} - 2^{\lfloor(n-1)/2\rfloor}$.  This paper presents the first efficient algorithm to construct orientable sequences that reach this upper bound, asymptotically; more specifically, our algorithm constructs orientable sequences via cycle-joining and a successor-rule approach requiring $O(n)$ time per bit and $O(n)$ space.  
This answers a longstanding open question from Dai, Martin, Robshaw, Wild  [Cryptography and Coding III (1993)].  Applying a recent concatenation-tree framework, the same sequences can be generated in $O(1)$-amortized time per bit using $O(n^2)$ space.
Our sequences are applied to find new longest-known (aperiodic) orientable sequences for $n\leq 20$.

\end{abstract}

\newpage

\section{Introduction}



%


 Orientable sequences were introduced by Burns and Mitchell~\cite{BM} and studied by Dai, Martin, Robshaw, and Wild~\cite{Dai} with applications related to robotic position sensing.  In particular, consider an autonomous robot with limited sensors.  To determine its location on a cyclic track labeled with black and white squares, the robot scans a window of $n$ squares directly beneath it.  For the position \emph{and} orientation to be uniquely determined, the track must be designed with the property that each length $n$ window can appear at most once in \emph{either direction}.  A cyclic binary sequence (track) with such a property is called an \defo{orientable sequence} of order $n$ (an $\OS(n)$).  By this definition, an orientable sequence does not contain a length-$n$ substring that is a palindrome.


%
%

\begin{exam}
Consider $\mathcal{S}=001011$.
In the forward direction, including the wraparound, $\mathcal{S}$ contains the six 5-tuples $00101$, $01011$, $10110$, $01100$, $11001$, and $10010$; in the reverse direction $\mathcal{S}$ contains $11010$, $10100$, $01001$, $10011$, $00110$, and $01101$.  Since each substring is unique, $\mathcal{S}$ is an $\OS(5)$ with length (period) six. 
\end{exam}

\noindent
Orientable sequences do not exist for $n=1$, and somewhat surprisingly, the maximum length $M_n$ of an $\OS(n)$ is known only for $1 < n \leq 7$. Since the number of palindromes of length $n$ is $2^{\lfloor (n+1)/2\rfloor}$, a trivial upper bound on $M_n$  is $(2^n - 2^{\lfloor (n+1)/2\rfloor})/2= 2^{n-1} - 2^{\lfloor(n-1)/2\rfloor}$. 
%

In addition to providing a tighter upper bound, Dai, Martin, Robshaw, and Wild~\cite{Dai}  provide a lower bound $L_n$ on $M_n$ by demonstrating the \emph{existence} of $\OS(n)$s via cycle-joining with length $L_n$ asymptotic to their upper bound. They conclude by stating the following open problem relating to orientable sequences whose lengths (periods) attain the lower bound. See Section~\ref{sec:bounds} for the explicit upper and lower bounds.
%

%

\begin{quote}
\emph{We note that the lower bound on the maximum period was obtained using an existence construction \ldots ~ It is an open problem whether a more practical procedure exists for the construction of orientable sequences that have this asymptotically optimal period.}
\end{quote}

Recently, some progress was made in this direction by Mitchell and Wild~\cite{MW}.  They apply Lempel's lift~\cite{lempel} to obtain an $\OS(n)$ recursively from an $\OS(n{-}1)$.  This construction can generate orientable sequences in $O(1)$-amortized time per bit; however, it requires exponential space, and there is an exponential time delay before the first bit can be output.  Furthermore, they state that their work ``only \emph{partially} answers the question, since the periods/lengths of the sequences produced are not asymptotically optimal.''

\begin{result}
\noindent
{\bf Main result}:
 By developing a parent rule to define a cycle-joining tree, we construct an  $\OS(n)$ of length $L_n$ in $O(n)$ time per bit using $O(n)$ space.  Then, by applying the recent theory of \emph{concatenation trees}~\cite{concat}, the same orientable sequences can be constructed in $O(1)$-amortized time per bit using $O(n^2)$ space.


\end{result}
\medskip 

\noindent

\noindent
{\bf Outline.}  \footnote{This paper was presented in part at Combinatorial Pattern Matching 2024 (CPM 2024)~\cite{G&S-Orientable:2024}.}
In Section~\ref{sec:prelim}, we present necessary background definitions and notation, a review of the lower bound $L_n$ and upper bound $U_n$ from~\cite{Dai}, and a review of the cycle-joining technique.
In Section~\ref{sec:parent}, we provide a parent rule for constructing a cycle-joining tree composed of ``reverse-disjoint'' cycles corresponding to asymmetric bracelets. This leads to our $O(n)$ time per bit construction of orientable sequences of length $L_n$.
In Section~\ref{sec:periodic}, we present properties of the periodic nodes in our cycle-joining tree and in Section~\ref{sec:children}, we provide an algorithm for determining the children of a given node.
In Section~\ref{sec:concat}, we convert our cycle-joining trees to concatenation trees, which leads to a construction requiring $O(1$)-amortized time per bit.  
In Section~\ref{sec:extend} we discuss the algorithmic techniques used to extend our constructed orientable sequences to find longer ones for $n\leq 20$. Then in Section~\ref{sec:aperiodic}, we apply similar techniques to find some longest known acyclic orientable sequences for $n\leq 20$.
We conclude in Section~\ref{sec:fut} with a summary of our results and directions for future research.  Implementations of our algorithms are available for download at \url{http://debruijnsequence.org/db/orientable}.

\subsection{Related work}  \label{sec:related}

Recall the problem of determining a robot's position and orientation on a track. Suppose now that we allow the track to be non-cyclic. That is, the beginning of the track and the end of the track are not connected. Then the corresponding sequence that allows one to determine orientation and position is called an \defo{acyclic orientable sequence}. One does not consider the substrings in the wraparound for this variation of an orientable sequence. Note that one can always construct an acyclic $\OS(n)$ from a cyclic $\OS(n)$ by taking the cyclic $\OS(n)$ and appending its prefix of length $n{-}1$ to the end. See the paper by Burns and Mitchell~\cite{BM} for more on acyclic orientable sequences, which they call \emph{aperiodic $2$-orientable window sequences}.
Alhakim et al.~\cite{AlhakimOrient} generalize the recursive results of Mitchell and Wild~\cite{MW} to construct orientable sequences over alphabets of size two or greater; they also generalize the upper bound, by Dai et al.~\cite{Dai}, on the length of an orientable sequence.
Rampersad and Shallit~\cite{Rampersad&Shallit:2003} showed that for every alphabet of size two or greater, there is an infinite sequence such that for every sufficiently long substring, the reversal of the substring does not appear in the sequence. Fleischer and Shallit~\cite{Fleischer&Shallit:2019} later reproved the results of the previous paper using theorem-proving software. See~\cite{Currie&Lafrance:2016, Mercas:2017} for more work on sequences avoiding reversals of substrings.






\section{Preliminaries} \label{sec:prelim}

Let $\B(n)$ denote the set of all length-$n$ binary strings. 
Let $\alpha = \tt{a}_1\tt{a}_2\cdots \tt{a}_n \in \B(n)$ and $\beta = \tt{b}_1\tt{b}_2\cdots \tt{b}_m \in \B(m)$ for some $m,n\geq 0$. Throughout this paper, we assume $0<1$ and use lexicographic order when comparing two binary strings. More specifically, we say that $\alpha < \beta$ either if $\alpha$ is a prefix of $\beta$ or if $\tt{a}_i < \tt{b}_i$ for the smallest $i$ such that $\tt{a}_i \neq \tt{b}_i$. 
We say that $\alpha$ is a \defo{rotation} of $\beta$ if $m=n$ and there exist strings $x$ and $y$ such that $\alpha = xy$ and $\beta = yx$. 
The \defo{weight} (density) of a binary string is the number of $1$s in the string. 
Let $\overline{\tt{a}}_i$ denote the complement of bit $\tt{a}_i$. Let $\alpha^R$ denote the reversal $\tt{a}_n\cdots \tt{a}_2\tt{a}_1$ of $\alpha$; $\alpha$ is a \defo{palindrome} if $\alpha = \alpha^R$.
For $j\geq 1$, let $\alpha^j$ denote $j$ copies of $\alpha$ concatenated together.
If $\alpha = \gamma^j$ for some non-empty string $\gamma$ and some $j > 1$, then $\alpha$ is said to be \defo{periodic}\footnote{Periodic strings are are also known as \emph{powers} in the literature. The term \emph{periodic} is sometimes used to denote a string of the form $(\alpha\beta)^i \alpha$ where $\alpha$ is non-empty, $\beta$ is possibly empty, $i\geq 1$, and $\frac{|(\alpha\beta)^i \alpha|}{|\alpha\beta|}\geq 2$. Under this definition, the word $\tt{alfalfa}$ is periodic, but $\tt{bonobo}$ is not.}; otherwise, $\alpha$ is said to be \defo{aperiodic} (or \defo{primitive}). Let $\ap(\alpha)$ denote the shortest string $\gamma$ such that $\alpha = \gamma^t$ for some positive integer $t$; we say $\gamma$ is the \defo{aperiodic prefix} of $\alpha$.  Observe that $\alpha$ is aperiodic if and only if $\ap(\alpha) = \alpha$.

A \defo{necklace class} is an equivalence class of strings under rotation. Let $[\alpha]$ denote the set of strings in $\alpha$'s necklace class.  We say $\alpha$ is a \defo{necklace} if it is the lexicographically smallest string in $[\alpha]$. Let $\tilde{\alpha}$ denote the necklace in $[\alpha]$.     Let $\N(n)$ denote the set of length-$n$ necklaces.
A \defo{bracelet class} is an equivalence class of strings under rotation and reversal; let $\langle \alpha \rangle$ denote the set of strings in $\alpha$'s bracelet class. Thus, $\langle \alpha \rangle = [\alpha] \cup [\alpha^R]$. We say $\alpha$ is a \defo{bracelet} if it is the lexicographically smallest string in $\langle \alpha \rangle$. Note that in general, a bracelet is always a necklace, but a necklace need not be a bracelet. For example, the string $001011$ is both a bracelet and a necklace, but the string $001101$ is a necklace not a bracelet.    

A necklace $\alpha$ is \defo{symmetric} if it belongs to the same necklace class as $\alpha^R$, i.e., both $\alpha$ and $\alpha^R$ belong to $[\alpha]$.  By this definition, a symmetric necklace is necessarily a bracelet. If a necklace or bracelet is not symmetric, it is said to be \defo{asymmetric}.
Let $\A(n)$ denote the set of all asymmetric bracelets of length $n$.  
Table~\ref{tab:neck} lists all $60$ necklaces of length $n=9$ partitioned into asymmetric necklace pairs and symmetric necklaces.  The asymmetric necklace pairs belong to the same bracelet class, and the first string in each pair is an asymmetric bracelet.  Thus, $|\A(9)| = 14$.
In general, $|\A(n)|$ is equal to the number of necklaces of length $n$ minus the number of bracelets of length $n$;  for $n=6, 7, \ldots 15$, this sequence of values $|\A(n)|$ is given by $1$, $2$, $6$, $14$, $30$, $62$, $128$, $252$, $495$, $968$ and it corresponds to 
sequence \href{https://oeis.org/A059076}{\underline{A059076}} in The On-Line Encyclopedia of Integer Sequences~\cite{oeis}.  Asymmetric bracelets have been studied previously in the context of efficiently ranking/unranking bracelets~\cite{brace}. 

\begin{lemma}\label{theorem:braceletTest}
    One can determine whether a string $\alpha$ is in $\A(n)$ in $O(n)$ time using $O(n)$ space.
\end{lemma}
\begin{proof}
    A string $\alpha$ will belong to $\A(n)$ if $\alpha$ is a necklace and the necklace of $[\alpha^R]$ is lexicographically larger than $\alpha$. These tests can be computed in $O(n)$ time using $O(n)$ space~\cite{Booth}. 
\end{proof}
%
%

%

\begin{table}[ht]
    \centering
\begin{tabular}{c | l }
{\bf Asymmetric necklace pairs} & ~~~~~~~~~~~~ {\bf Symmetric necklaces} \\ \hline

\blue{000001011}~,~000001101  &   000000000             ~~~         0001000\underline{11} ~~~ 00\underline{1110111} \\ 
\blue{000010011}~,~000011001  &  00000000\underline{1} ~~~ 000\underline{101101} ~~~  00\underline{1111111} \\ 
\blue{000010111}~,~000011101  &  0000000\underline{11} ~~~ 000\underline{110011} ~~~  0101010\underline{11} \\  
\blue{000100101}~,~000101001  &  000000\underline{101} ~~~ 000\underline{111111} ~~~  01010\underline{1111} \\ 
\blue{000100111}~,~000111001  &  000000\underline{111} ~~~ 00100100\underline{1} ~~~ 010\underline{111111} \\  
\blue{000101011}~,~000110101  &  00000\underline{1001} ~~~ 00100\underline{1111} ~~~  0110110\underline{11} \\ 
\blue{000101111}~,~000111101  &  00000\underline{1111} ~~~ 0010100\underline{11} ~~~ 0110\underline{11111} \\  
\blue{000110111}~,~000111011  &  0000\underline{10001} ~~~ 00\underline{1010101} ~~~  01110\underline{1111} \\ 
\blue{001001011}~,~001001101  &  0000\underline{10101} ~~~ 00\underline{1011101} ~~~ 0\underline{11111111} \\   
\blue{001010111}~,~001110101  &  0000\underline{11011} ~~~ 001100\underline{111} ~~~  111111111 \\ 
\blue{001011011}~,~001101101  &  0000\underline{11111} ~~~ 00\underline{1101011}  \\ 
\blue{001011111}~,~001111101  &  \\ 
\blue{001101111}~,~001111011  &  \\
\blue{010110111}~,~010111011  &  
\end{tabular}

    \caption{A listing of all $60$ necklaces in $\N(9)$ partitioned into asymmetric necklace pairs and symmetric necklaces. The first column of the asymmetric necklaces corresponds to the $14$ asymmetric bracelets $\A(9)$.}
    \label{tab:neck}
\end{table}


Lemma~\ref{lem:pal} is considered a folklore result in combinatorics on words; see Theorem 4 in~\cite{Brlek&Hamel&Nivat&Reutenauer:2004} for a variant of the lemma. We provide a short proof for the interested reader.

\begin{lemma}
\label{lem:pal}
A necklace $\alpha$ is symmetric if and only if there exists palindromes $\beta_1$ and $\beta_2$ such that $\alpha = \beta_1 \beta_2$.
\end{lemma}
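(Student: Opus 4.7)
The plan is to prove both implications separately, with each direction relying only on elementary manipulations of reversals and concatenations.

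For the backward direction, I would assume $\alpha = \beta_1 \beta_2$ where $\beta_1$ and $\beta_2$ are palindromes (possibly with one empty). Reversing both sides gives $\alpha^R = (\beta_1\beta_2)^R = \beta_2^R \beta_1^R = \beta_2 \beta_1$, using that each $\beta_i$ equals its own reversal. Since $\beta_2 \beta_1$ is obtained from $\beta_1 \beta_2$ by a cyclic rotation of $|\beta_1|$ positions, we conclude $\alpha^R \in [\alpha]$, so $\alpha$ is symmetric by definition.

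For the forward direction, suppose $\alpha$ is a symmetric necklace. Then $\alpha^R \in [\alpha]$, so there exist strings $x$ and $y$ (possibly empty) with $\alpha = xy$ and $\alpha^R = yx$. On the other hand, $\alpha^R = (xy)^R = y^R x^R$. Equating $yx = y^R x^R$ and matching the first $|y|$ and last $|x|$ characters on both sides (which is valid since $|y| = |y^R|$ and $|x| = |x^R|$) forces $y = y^R$ and $x = x^R$. Setting $\beta_1 = x$ and $\beta_2 = y$ then yields the required decomposition into palindromes.

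The main obstacle, such as it is, is merely verifying that the length-matching argument in the forward direction is rigorous and covers the degenerate case where $\alpha$ is itself a palindrome. In that case, the rotation witnessing symmetry can be taken to be the trivial one with $x = \alpha$ and $y = \epsilon$, which still yields a valid decomposition (treating the empty string as a palindrome). No deeper combinatorial machinery is needed; the entire argument is a direct manipulation of the equation $\alpha^R = \text{rotation of }\alpha$.
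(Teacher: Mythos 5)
Your proof is correct and takes essentially the same approach as the paper's: both directions manipulate the identity $\alpha^R = \text{rotation of } \alpha$ by writing the rotation as a two-block decomposition, reversing, and comparing prefix/suffix blocks to extract the two palindromes. The only cosmetic difference is that you compare $\alpha^R$ with itself (written as $yx$ and as $y^Rx^R$), whereas the paper compares $\alpha$ with itself, but the argument is identical.
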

\begin{proof}
Suppose $\alpha$ is a symmetric necklace. By definition, it is equal to the necklace of $[\alpha^R]$. Thus, there exist strings $\beta_1$ and $\beta_2$ such that $\alpha = \beta_1\beta_2 = (\beta_2\beta_1)^R = \beta_1^R\beta_2^R$. Therefore, $\beta_1=\beta_1^R$ and $\beta_2=\beta_2^R$, which means $\beta_1$ and $\beta_2$ are palindromes. Suppose there exists two palindromes $\beta_1$ and $\beta_2$ such that $\alpha = \beta_1 \beta_2$. Since $\beta_1$ and $\beta_2$ are symmetric, we have that $\alpha^R = (\beta_1\beta_2)^R = \beta_2^R\beta_1^R = \beta_2\beta_1$. So $\alpha$ belongs to the same necklace class as $\alpha^R$ and hence is symmetric.
\end{proof}
\begin{corollary} \label{cor:pal}
If $\alpha = 0^s\beta$ is a symmetric bracelet such that the string $\beta$ begins and ends with $1$ and does not contain $0^s$ as a substring, then $\beta$ is a palindrome. 
\end{corollary}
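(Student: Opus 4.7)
The plan is to exploit the symmetry $\alpha^R \in [\alpha]$ together with the fact that, under the stated hypotheses, the block $0^s$ occurs exactly once as a cyclic substring of $\alpha$. First I would verify this uniqueness. By assumption $\beta$ contains no occurrence of $0^s$, so the only candidates for a length-$s$ zero block in the cyclic word $\alpha$ are the prefix and the wrap-around substrings. However, every wrap-around length-$s$ substring of $\alpha = 0^s\beta$ must contain either the final bit of $\beta$ or (when $\beta$ is short enough that the window extends past $\beta$ entirely) the first bit of $\beta$; since both are $1$, no wrap-around window can equal $0^s$. Thus the only rotation in $[\alpha]$ that begins with $0^s$ is $\alpha$ itself.

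Next I would use the symmetry hypothesis. Since $\alpha$ is a symmetric bracelet, $\alpha^R = \beta^R 0^s$ lies in $[\alpha]$. Rotating $\alpha^R$ so that its trailing $0^s$ block moves to the front produces the rotation $0^s\beta^R$, which therefore also lies in $[\alpha]$. By the uniqueness step above, this rotation must coincide with $\alpha$, which gives $0^s\beta^R = 0^s\beta$ and hence $\beta = \beta^R$. That is exactly the conclusion of the corollary.

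As a sanity check I would also note that one can derive the corollary directly from Lemma~\ref{lem:pal}: writing $\alpha = \beta_1\beta_2$ as a concatenation of two palindromes, one rules out every possibility except $|\beta_1| = s$. If $|\beta_1| < s$ then $\beta_2$ would start with $0$ but end with $1$, contradicting it being a palindrome; if $|\beta_1| > s$ then $\beta_1 = 0^s\gamma$ for a non-empty prefix $\gamma$ of $\beta$, and the palindrome condition on $\beta_1$ forces $\gamma$ to end with $0^s$, contradicting the hypothesis that $\beta$ avoids $0^s$. So $\beta_1 = 0^s$ and $\beta_2 = \beta$ is a palindrome. I do not anticipate any real obstacle in the proof: the entire content is the observation that a maximal zero-run in a necklace pins down its rotation, and the hypotheses have been arranged precisely so that $0^s$ plays that role.
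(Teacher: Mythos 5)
Your proof is correct. The paper offers no explicit argument here---it simply labels the statement a corollary of Lemma~\ref{lem:pal}---and your ``sanity check'' is exactly that intended derivation: decompose $\alpha = \beta_1\beta_2$ into palindromes, then use the facts that $\alpha$ begins with a run of exactly $s$ zeros, that it ends in $1$, and that $\beta$ avoids $0^s$ to rule out $|\beta_1| < s$ and $|\beta_1| > s$, leaving $\beta_1 = 0^s$ and $\beta_2 = \beta$. Your primary argument is a genuinely different, self-contained route that bypasses the lemma: the hypotheses force $0^s$ to occur exactly once as a cyclic factor of $\alpha$, so $\alpha$ is the unique representative of its rotation class beginning with $0^s$; since $\alpha$ is symmetric, the rotation $0^s\beta^R$ of $\alpha^R$ also lies in $[\alpha]$ and also begins with $0^s$, forcing $0^s\beta^R = 0^s\beta$ and hence $\beta = \beta^R$. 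Both are about the same length, but the direct version makes the role of the ``no internal $0^s$'' hypothesis more transparent (it serves to make the leading zero-run a canonical anchor for rotations), while the lemma-based version integrates more smoothly with the paper's surrounding machinery. One small caution in the lemma route: when $|\beta_1| > s$, you should note explicitly that a palindrome of the form $0^s\gamma$ with $\gamma$ starting in $1$ is forced to have length at least $2s+1$ (the mirror image of the leading $1$ gives a trailing $0^s$), so the case $|\gamma| < s$ never arises and ``$\gamma$ ends with $0^s$'' is indeed well-posed; as written, the claim silently assumes $|\gamma| \ge s$.
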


\subsection{Bounds on $M_n$} \label{sec:bounds}
Dai, Martin, Robshaw, and Wild~\cite{Dai} gave a lower bound $L_n$ and an upper bound $U_n$ on the maximum length $M_n$ of an $\OS(n)$.\footnote{\label{note1} These bounds correspond to $\tilde{L}_n$ and $\tilde{U}_n$, respectively, as they appear in~\cite{Dai}.} The lower bound $L_n$ corresponds to the length of the sequence that results from joining all asymmetric necklaces in a specific way.
Their lower bound $L_n$ is the following, where $\mu$ is the 
M\"{o}bius function:
\begin{equation}\label{eqn:L}
L_n =\sum_{\alpha \in \A(n)}|\ap(\alpha)| = \left( 2^{n-1} -  \frac{1}{2} \sum_{d  |  n} \mu(n/d) \frac{n}{d} H(d) \right), \ \ \text{ where } \ \ H(d) = \frac{1}{2} \sum\limits_{i  |  d} i \left( 2^{\lfloor \frac{i+1}{2} \rfloor} + 2^{\lfloor \frac{i}{2} \rfloor +1} \right).
\end{equation}

\noindent
Their upper bound $U_n$ is the following:$^{\ref{note1}}$

\begin{center}
$U_n= \left\{ \begin{array}{ll} 
         2^{n-1} - \frac{41}{9}2^{\frac{n}{2}-1} + \frac{n}{3} + \frac{16}{9},     \    &\ \  \mbox{if $n \bmod 4 = 0$;}\\
         2^{n-1} - \frac{31}{9}2^{\frac{n-1}{2}} + \frac{n}{3} + \frac{19}{9},     \    &\ \  \mbox{if $n \bmod 4 = 1$;}\\
         2^{n-1} - \frac{41}{9}2^{\frac{n}{2}-1} + \frac{n}{6} + \frac{20}{9},    \    &\ \  \mbox{if $n \bmod 4 = 2$;}\\
         2^{n-1} - \frac{31}{9}2^{\frac{n-1}{2}} + \frac{n}{6} + \frac{43}{18},     \    &\ \  \mbox{if $n \bmod 4 = 3$.}\\
         
         \end{array} \right. $
\end{center}

\noindent

%

\noindent
These bounds are calculated in Table~\ref{table:bounds} for $n$ up to 20.  This table also illustrates the length $R_n$ of the $\OS(n)$ produced by the recursive construction by Mitchell and Wild~\cite{MW}, starting from an initial orientable sequence of length 80 for $n=8$. The column labeled $L^*_n$ indicates the longest known orientable sequences we discovered by applying a combination of techniques (discussed in Section~\ref{sec:extend}) to our orientable sequences of length $L_n$.

\begin{table}[ht]
\begin{center}
\begin{tabular}{r|r r r r}
$n$ & $R_n$ &  $L_n$  &   $L^{*}_n$  &   $U_n$ \\ \hline
 5   &       -&          0   &      \blue{ {\bf 6}}   &     6 \\  
 6   &        -&         6   &     \blue{ {\bf 16}}   &     17 \\ 
 7   &        -&        14   &     \blue{ {\bf 36}}    &     40 \\ 
 8   &         80 &       48   &     \blue{ 92}  &       96 \\ 
 9   &        161 &       126   &     \blue{ 174}  &       206 \\ 
10   &        322 &       300   &     \blue{  416} &       443 \\ 
11   &        645 &       682   &    \blue{   844} &       918 \\ 
12   &        1290 &      1530   &   \blue{   1844} &       1908 \\ 
13   &        2581 &      3276   &    \blue{  3700} &       3882 \\ 
14   &        5162 &      6916   &    \blue{  7694} &       7905 \\ 
15   &        10325 &     14520   &   \blue{  15394} &       15948 \\ 
16   &        20650 &     29808   &    \blue{ 31483} &       32192 \\ 
17   &        41301 &     61200   &   \blue{  63135}&       64662 \\ 
18   &        82602 &     124368   &   \blue{ 128639} &       129911 \\ 
19   &        165205 &    252434   &   \blue{ 257272} &       260386 \\ 
20   &        330410 &    509220   &   \blue{ 519160} &       521964 \\ 
\end{tabular} \ \ \ \  \ \ \ \ \ \ \  \ \ \ 


\end{center}
    \caption{Lower bounds $R_n, L_n, L^*_n$ and upper bound $U_n$ for $M_n$.}
    \label{table:bounds}
\end{table}
\subsection{Cycle joining} \label{sec:cycle}

Given $\Set \subseteq \B(n)$, a \defo{universal cycle} $U$ for $\Set$ is a cyclic sequence of length $|\Set|$ that contains each string in $\Set$ as a substring (exactly once).
Thus, an orientable sequence is a universal cycle.
If $\Set = \B(n)$ then $U$ is known as a \defo{de Bruijn sequence}.
Given a universal cycle $U$ for $\Set$, a \defo{successor rule} for $U$ is a function $f:\Set \rightarrow \{0,1\}$ such that $f(\alpha)$ is the bit following $\alpha$ in $U$.  

%


Cycle-joining is perhaps the most fundamental technique applied to construct universal cycles.  For instance, see~\cite{Dai}.  If $\Set$ is closed under rotation, then it can be partitioned into necklace classes (cycles); each cycle is disjoint.
Let $\alpha = \tt{a}_1\tt{a}_2\cdots \tt{a}_n$ and $\hat \alpha = \tt{\overline{a}}_1\tt{a}_2\cdots \tt{a}_n$; we say $(\alpha, \hat \alpha)$ is a \defo{conjugate pair}.
Two disjoint cycles can be joined if they each contain one string of a \defo{conjugate pair} as a substring.  This approach resembles pioneering algorithm of Hierholzer (1873) to construct an Euler cycle in an Eulerian graph.
\begin{exam} 
Consider disjoint subsets $\Set_1 = [011111] \cup [001111]$ and $\Set_2 = [010111] \cup [010101]$, where $n=6$.  Then
$U_1 = 110011\underline{110111}$ is a universal cycle for $\Set_1$ and $U_2 = 01\underline{010111}$ is a universal cycle for $\Set_2$.  Since $(110111,010111)$ is a conjugate pair, $U = 110011\underline{110111}01\underline{010111}$ is a universal cycle for $\Set_1 \cup \Set_2$.
\end{exam}
%


\noindent
A \defo{cycle-joining tree} is a tree with nodes representing disjoint universal cycles; an edge between two nodes implies they each contain one string of a conjugate pair.  If $\Set$ is the set of all length-$n$ strings belonging
to the disjoint cycles of a cycle-joining tree, then 
the tree defines a universal $U$ for $\Set$ along with a corresponding successor rule; see Section~\ref{sec:parent} for an example.  
%
For most universal cycle constructions, a corresponding cycle-joining tree can be defined by a rather simple \defo{parent rule}.  For example, when $\Set = \B(n)$, the following are perhaps the \emph{simplest} parent rules that define how to construct cycle-joining trees with nodes corresponding to necklace cycles represented by $\N(n)$~\cite{binframework,karyframework,concat}.

\begin{itemize}
\item \bblue{Last-$0$}: rooted at $1^n$ and the parent of every other node $\alpha \in \N(n)$ is obtained by flipping the \blue{last $0$}.
\item \bblue{First-$1$}: rooted at $0^n$ and the parent of every other node $\alpha \in \N(n)$ is obtained by flipping the \blue{first $1$}.
\item \bblue{Last-$1$}: rooted at $0^n$ and the parent of every other node $\alpha \in \N(n)$ is obtained by flipping the \blue{last $1$}.
\item \bblue{First-$0$}:  rooted at $1^n$ and the parent of every other node $\alpha \in \N(n)$ is obtained by flipping the \blue{first $0$}.
\end{itemize}

\noindent
These rules induce the cycle-joining trees
$\fulltree_1$, $\fulltree_2$, $\fulltree_3$, $\fulltree_4$ illustrated in Figure~\ref{fig:big4} for $n=6$.  Note that for $\fulltree_3$  and $\fulltree_4$, the parent of a node $\alpha$ is obtained by first flipping the highlighted bit and then rotating the string to its lexicographically least rotation to obtain a necklace.
Each node $\alpha$ and its parent $\beta$ are joined by a conjugate pair, where the highlighted bit in $\alpha$ is the first bit in one of the conjugates.  For example, the nodes $\alpha = 0\rred{1}1011$ and $\beta = 001011$ in $\fulltree_2$ from Figure~\ref{fig:big4} are joined by the conjugate pair $(\rred{1}10110, 010110)$.  
%

%
\begin{figure}[ht]
\centering
\resizebox{5.4in}{!}{\includegraphics{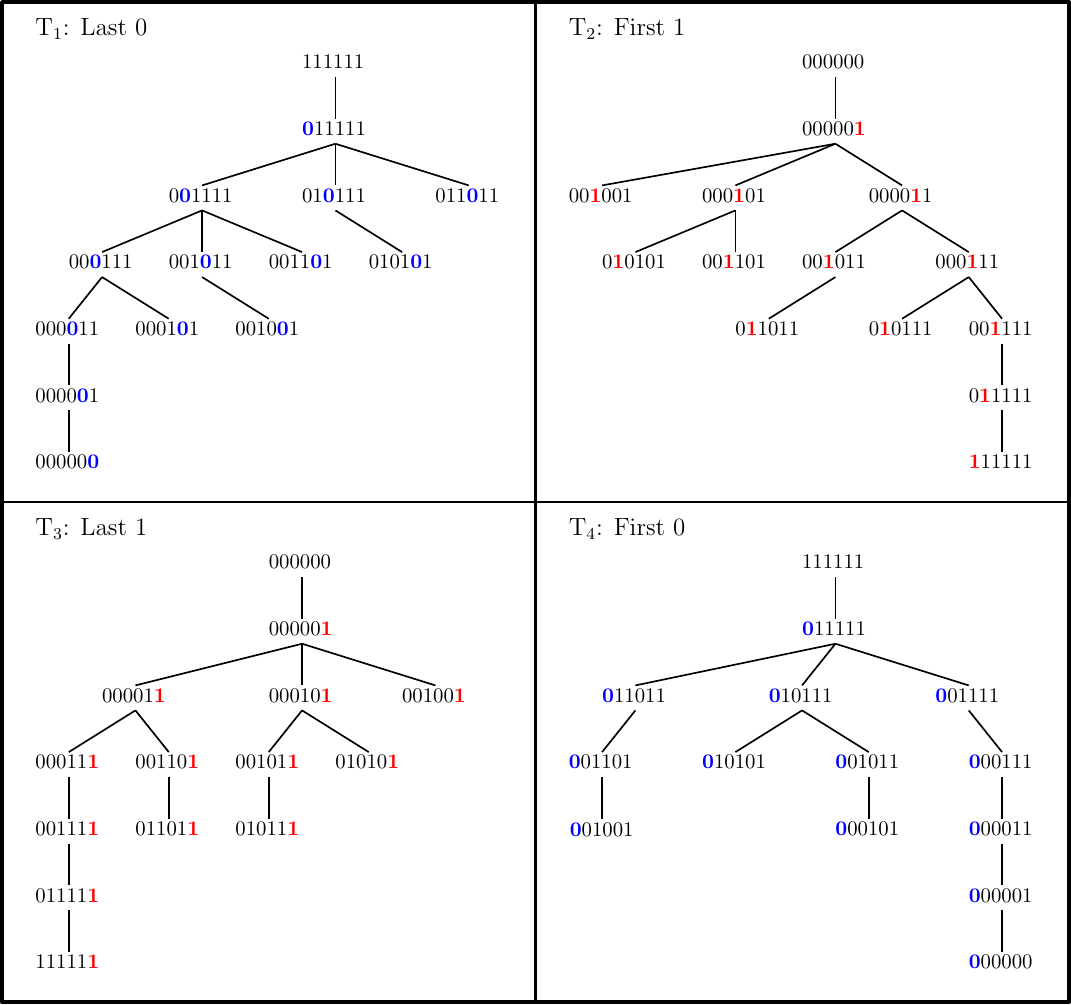}}
\caption{
Cycle-joining trees for $\B(6)$ from simple parent rules. }  
\label{fig:big4}
\end{figure}

\section{An efficient cycle-joining construction of orientable sequences}  \label{sec:parent}

Consider the set of asymmetric bracelets $\A(n) = \{\alpha_1, \alpha_2, \ldots, \alpha_t\}$.   Recall, that each symmetric bracelet is a necklace.
Let $\Set(n) =   [\alpha_1]  \cup [\alpha_2] \cup \cdots \cup [ \alpha_t ]$.
From~\cite{Dai}, we have $|\Set(n)| = L_n$.  By definition, there is no string $\alpha \in \Set(n)$ such that $\alpha^R \in \Set(n)$.  Thus, a universal cycle for $\Set(n)$ is an $\OS(n)$.  

To construct a cycle-joining tree with nodes $\A(n)$, we apply a combination of three of the four simple parent rules described in the previous section.  First, we demonstrate that there is no such parent rule using at most two rules in combination. 
Assume $n \geq 8$. Observe that none of the necklaces in $\A(n)$ have weight $0$, $1$, $2$, $n{-}2$, $n{-}1$, or, $n$. Thus, 
$0^{n-4}1011$ and $0^{n-5}10011$ are both necklaces in $\A(n)$ with minimal weight three.  Similarly, $00101^{n-4}$ and $001101^{n-5}$ are necklaces in $\A(n)$ with maximal weight $n{-}3$. Therefore, 
when considering a parent rule for a cycle-joining tree with nodes $\A(n)$, the rule must be able to flip a $0$ to a $1$, or a $1$ to a $0$, i.e., if the rule applies a combination of the four rules from Section~\ref{sec:cycle}, it must include one of First-$0$ or Last-$0$, and one of First-$1$ and Last-$1$.

Let $\alpha = \tt{a}_1\tt{a}_2\cdots \tt{a}_n$ denote a necklace in $\A(n)$; it must begin with $0$ and end with $1$.  Then let
\begin{itemize}
    \item $\firstone(\alpha)$ be the necklace $\tt{a}_1\cdots\tt{a}_{i-1}\bblue{0}\tt{a}_{i+1}\cdots \tt{a}_n$, where $i$ is the index of the first $1$ in $\alpha$;
    \item $\lastone(\alpha)$ be the necklace of $[\tt{a}_1\tt{a}_2 \cdots \tt{a}_{n-1}\bblue{0}]$;  
    \item $\firstzero(\alpha)$ be the necklace of $[\rred{1}\tt{a}_2 \cdots \tt{a}_{n}]$;
    \item $\lastzero(\alpha)$ be the necklace $\tt{a}_1\cdots\tt{a}_{j-1}\rred{1}\tt{a}_{j+1}\cdots \tt{a}_n$, where $j$ is the index of the last $0$ in $\alpha$.
\end{itemize}
Note that $\firstone(\alpha)$ and $\lastzero(\alpha)$ are necklaces (easily observed by definition) obtained by flipping the $i$-th and $j$-th bit in $\alpha$, respectively;  $\lastone(\alpha)$ and $\firstzero(\alpha)$ are the result of flipping a bit and rotating the resulting string to obtain a necklace.   The following remark follows from the definition of necklace. 
\begin{remark} \label{rem:lastone}
Let $\alpha=\beta 10^t1$ be a necklace where $\beta$ is some string, and $t \geq 0$.  Then $\lastone(\alpha) =
0^{t+1}\beta1 $.
\end{remark}
Proposition~\ref{prop:2rules} illustrates that for $n$ sufficiently large, no two of the above four parent rules can be applied in combination to obtain a cycle-joining tree with nodes $\A(n)$.

\begin{proposition}~\label{prop:2rules}
Let $\mathrm{p}$ be a parent rule that applies some combination of $\firstone$, $\lastone$, $\firstzero$, and $\lastzero$ to construct a cycle-joining tree with nodes $\A(n)$. Then $\mathrm{p}$ must apply at least three of these rules for all $n\geq 10$.
\end{proposition}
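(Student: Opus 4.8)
The plan is to rule out, one at a time, each of the $\binom{4}{2}=6$ two-rule combinations by exhibiting, for each, a specific necklace in $\A(n)$ whose iterated parent sequence (under that combination) fails to terminate at a valid node of the tree — either it leaves $\A(n)$, or it enters a cycle, or it cannot reach the intended root. The key structural fact established just before the statement does most of the work: a valid parent rule must contain at least one of $\firstzero,\lastzero$ (a rule that flips a $0$ to a $1$, hence increases weight) and at least one of $\firstone,\lastone$ (a rule that flips a $1$ to a $0$, hence decreases weight). This immediately eliminates the two "same-direction" pairs $\{\firstone,\lastone\}$ and $\{\firstzero,\lastzero\}$, since iterating either drives the weight monotonically toward $0$ or $n$, but $\A(n)$ contains no necklaces of weight $\leq 2$ or $\geq n-2$, so the process must exit $\A(n)$ before reaching any root. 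That leaves the four "mixed" pairs: $\{\firstone,\firstzero\}$, $\{\firstone,\lastzero\}$, $\{\lastone,\firstzero\}$, $\{\lastone,\lastzero\}$.

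For each of the four mixed pairs I would argue as follows. A parent rule assigns to each non-root $\alpha$ exactly one of the two operations; since the tree is connected and finite, following parents from any node must reach the root in finitely many steps, so in particular no node may be its own ancestor. I would look for a short cycle among two or three small-weight necklaces. The natural candidates are the minimal-weight members of $\A(n)$ already identified in the text, namely $0^{n-4}1011$ and $0^{n-5}10011$ (weight $3$), together with weight-$4$ necklaces such as $0^{n-5}10111$, $0^{n-5}11011$, $0^{n-6}100111$, etc. For instance, applying $\firstzero$ to $0^{n-4}1011$ gives (after rotation to least form) a weight-$4$ necklace, and applying $\firstone$ to an appropriate weight-$4$ necklace returns $0^{n-4}1011$ (using Remark~\ref{rem:lastone}-style bookkeeping for the $\lastone$ cases and the explicit definitions for the rest); whichever of the two operations the rule picks at each of these two nodes, one checks that a $2$-cycle or $3$-cycle is forced, contradicting the tree structure. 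The computations are small and finite, parameterised cleanly by $n$, and the hypothesis $n\geq 10$ is exactly what is needed so that these "padding-zero" strings $0^{n-4}1011$, $0^{n-5}10011$, $0^{n-6}100111$, $0^{n-5}10111$, etc., are genuinely distinct necklaces lying in $\A(n)$ (one must verify asymmetry, which holds once the $0$-run is long enough that reversal produces a lexicographically larger necklace).

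The main obstacle is a bookkeeping one rather than a conceptual one: each of the four mixed pairs offers the rule a binary choice at each small necklace, so one must handle a modest case split ($2$ or $4$ subcases per pair) and in each subcase pin down a concrete cycle. The delicate part is the $\firstzero$ and $\lastone$ operations, which involve a rotation back to necklace form, so one has to compute the least rotation explicitly; Remark~\ref{rem:lastone} handles $\lastone$ on strings of the form $\beta10^t1$, and a symmetric observation handles $\firstzero$ on strings beginning with a long $0$-run, which is exactly the regime our witnesses live in. A secondary point to be careful about is confirming that every witness string is actually in $\A(n)$ for all $n\geq 10$ (necklace-ness is immediate from the $0$-run prefix; asymmetry requires the short argument that the reversal's necklace is strictly larger), and that the threshold $n\geq 10$ — as opposed to the $n\geq 8$ used earlier — is tight enough to make all the displayed strings pairwise distinct and well-defined. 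Once these finitely many cycles are exhibited, connectedness of a tree gives the contradiction and the proposition follows.
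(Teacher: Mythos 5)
Your reduction to ruling out the six two-rule combinations is sound, and your weight argument correctly disposes of the two same-direction pairs (it implicitly relies on the observation, made in the text just before the statement, that $\A(n)$ contains \emph{two} necklaces of minimal weight $3$ and two of maximal weight $n-3$, so at least one of them is not the root and would need a parent of weight $2$ or $n-2$). The gap is in your treatment of the four mixed pairs, where your concrete plan is to exhibit a forced $2$- or $3$-cycle in the parent chain. That is not the actual obstruction, and such short cycles are not forced. Take the pair $\{\firstone,\firstzero\}$: for $\gamma_j=0^{n-4-j}101^{2+j}$ one has $\firstone(\gamma_j)=0^{n-2-j}1^{2+j}$, which is symmetric, so the parent of $\gamma_j$ is forced to be $\firstzero(\gamma_j)=\gamma_{j+1}$; the chain starting at $0^{n-4}1011$ therefore climbs monotonically in weight all the way to $00101^{n-4}$, where \emph{both} allowed operations produce symmetric necklaces ($00001^{n-4}$ and $0101^{n-3}$), and no cycle ever appears. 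Your candidate cycle also does not close up: $\firstzero(0^{n-4}1011)=0^{n-5}10111$, but $\firstone(0^{n-5}10111)=0^{n-3}111\notin\A(n)$; the weight-$4$ necklaces whose $\firstone$-image equals $0^{n-4}1011$ would be \emph{children} of it in the putative tree, not ancestors, so no contradiction arises from them.

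The mechanism that does work --- and the one the paper uses --- is the existence of \emph{orphans}: nodes for which every operation permitted by the rule yields a string outside $\A(n)$ (a symmetric necklace or a non-bracelet). Such a node can only be the root, so exhibiting two distinct orphans, or one orphan plus a second node all of whose permitted images lie outside $\A(n)$, is an immediate contradiction. Concretely, if the rule omits $\firstzero$, then $0^{n-4}1011$ is an orphan under all three remaining operations ($0^{n-2}11$, $0^{n-3}101$, $0^{n-4}1111$ are symmetric) and hence must be the root; then $0^{n-5}10111$ forces $\lastone$ into the rule and $0^{n-6}110111$ forces $\firstone$, giving three rules. A parallel argument with high-weight witnesses ($00101^{n-7}011$, $00101^{n-4}$, and one further string) handles the case where the rule omits $\lastzero$. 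Note also that this case split (omit $\firstzero$ versus omit $\lastzero$) covers all combinations of size at most two in one pass, so you do not need to treat the four mixed pairs separately. I suggest you reorient your argument around "no valid parent exists" rather than "the parent chain cycles"; the small witnesses you identified are essentially the right ones, but the contradiction they yield is of the former kind.
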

\begin{proof}
Suppose $n\geq 10$. By our earlier observation, any parent rule for a cycle-joining tree with nodes $\A(n)$ must be able to flip a $0$ to a $1$, and a $1$ to a $0$. Therefore, $\mathrm{p}$ must include one of $\firstzero$ or $\lastzero$, and one of $\firstone$ and $\lastone$. 

Suppose $\mathrm{p}$ does not apply $\firstzero$. Then it must apply $\lastzero$. Consider 
three asymmetric bracelets in $\A(n)$: $\alpha_1 = 0^{n-4}1011$, $\alpha_2 = 0^{n-5}10111$, and $\alpha_3 = 0^{n-6}110111$.  Clearly, $\firstone(\alpha_1)=0^{n-2}11$, $\lastone(\alpha_1)=0^{n-3}101$, and $\lastzero(\alpha_1)=0^{n-4}1111$ are symmetric.  Thus, $\alpha_1$ must be the root.  Both
$\firstone(\alpha_2)=0^{n-3}111$ and $\lastzero(\alpha_2)=0^{n-5}11111$ are symmetric, so $\mathrm{p}$ must apply $\lastone$. Both $\lastzero(\alpha_3)=0^{n-6}111111$ and $\lastone(\alpha_3)=0^{n-5}11011$ are symmetric, so $\mathrm{p}$ must apply $\firstone$. 

Suppose $\mathrm{p}$ does not apply $\lastzero$. Then it must apply $\firstzero$. Let $m = 0$ if $n$ is even, and $m=1$ otherwise. Let $\ell = (n-6-m)/2$. Note that $\ell \geq 2$ for $n\geq 10$.  Consider three asymmetric bracelets in $\A(n)$: $\beta_1 = 00101^{n-7}011$, $\beta_2 = 00101^{n-4}$, and $\beta_3 = 0^{\ell+1}10^{m+1}10^\ell11$.  Clearly, $\lastone(\beta_1)=000101^{n-7}01$ is symmetric and $\firstone(\beta_1)=00001^{n-7}011$ is not a bracelet. Additionally, $\firstzero(\beta_1)=0101^{n-7}0111$ is symmetric when $n=10$ and is not a bracelet for all $n>10$.  Thus, $\beta_1$ must be the root. Both $\firstone(\beta_2)=00001^{n-4}$ and $\firstzero(\beta_2)=0101^{n-3}$ are symmetric, so 
 $\mathrm{p}$ must apply $\lastone$. Now for $\beta_3$, we have that $\firstzero(\beta_3)=0^{\ell}10^{m+1}10^\ell 111$ is symmetric. We also have that $\lastone(\beta_3)=0^{\ell+2}10^{m+1}10^\ell1$ is symmetric when $n=11$ and is not a bracelet when $n=10$ or $n>11$. Thus, $\mathrm{p}$ must apply $\firstone$. 
\end{proof}

\noindent
For $n \geq 6$, we choose the lexicographically smallest length-$n$ asymmetric bracelet $\treeroot{n}=0^{n-4}1011$ to be the root of our cycle-joining tree.

\begin{result}
\noindent  \small
{\bf Parent rule for cycle-joining $\A(n)$:} 
Let $\treeroot{n}$ be the root.  Let $\alpha$ denote a non-root node in $\A(n)$. 
Then 

\begin{equation} \label{eq:par}
\parent(\alpha) =  
\left\{ \begin{array}{ll}
         \firstone(\alpha) & \ \ \mbox{if $\firstone(\alpha) \in \A(n)$;}\\
         \lastone(\alpha) & \ \  \mbox{if $\firstone(\alpha) \notin \A(n)$ and $\lastone(\alpha) \in \A(n)$;  }\\
         \lastzero(\alpha) \  &\ \  \mbox{otherwise.}\end{array} \right.
\end{equation}        

\vspace{-0.10in}

 \end{result}

\begin{theorem}  \label{thm:cycle}
   For $n \geq 6$, the parent rule $\parent(\alpha)$ in \emph{(\ref{eq:par})} induces a cycle-joining tree with nodes $\A(n)$ rooted at $\treeroot{n}$.  The tree has height less than $2(n-4)$.
\end{theorem}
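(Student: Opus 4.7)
The plan is to verify four claims in sequence: (a) whenever $\parent$ fires one of its three options on a non-root $\alpha$, the resulting pair $(\alpha, \parent(\alpha))$ is joined by an explicit conjugate pair; (b) the rule is well-defined, i.e., for every non-root $\alpha \in \A(n)$ at least one option yields an element of $\A(n)$; (c) iterating $\parent$ terminates at $\treeroot{n}$; and (d) every chain has length strictly less than $2(n-4)$.

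Claim (a) is a direct verification. For $\firstone$, if $i$ is the index of the first $1$ in $\alpha$, the rotations of $\alpha$ and $\firstone(\alpha)$ starting at position $i$ form the conjugate pair $(\tt{1}\tt{a}_{i+1}\cdots\tt{a}_n\tt{0}^{i-1},\, \tt{0}\tt{a}_{i+1}\cdots\tt{a}_n\tt{0}^{i-1})$. The $\lastone$ case uses Remark~\ref{rem:lastone}: writing $\alpha = \beta\tt{1}\tt{0}^t\tt{1}$ and $\lastone(\alpha) = \tt{0}^{t+1}\beta\tt{1}$, the shared conjugate pair is $(\tt{1}\beta\tt{1}\tt{0}^t,\, \tt{0}\beta\tt{1}\tt{0}^t)$, exhibited by rotating $\alpha$ to start at position $n$ and rotating $\lastone(\alpha)$ to start at position $t+1$. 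The $\lastzero$ case is symmetric to $\firstone$ using the position $j$ of the last $0$.

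Claim (b) is the main technical obstacle. Since $\firstone(\alpha)$, $\lastone(\alpha)$, and $\lastzero(\alpha)$ are always necklaces, each fails to lie in $\A(n)$ only by being symmetric or by having a reverse whose necklace is lex-smaller (so not a bracelet). I plan to case-split on the four combinations of failure modes for $\firstone(\alpha)$ and $\lastone(\alpha)$. In each subcase, Lemma~\ref{lem:pal} and Corollary~\ref{cor:pal} should yield a palindromic factorization of $\alpha$ after its leading zero block; these constraints should force $\lastzero(\alpha)$ to be both a bracelet (no rotation of its reverse can be lex-smaller, given the extracted factorization) and asymmetric (any symmetry of $\lastzero(\alpha)$ would, via Lemma~\ref{lem:pal}, propagate back to a symmetry of $\alpha$, contradicting $\alpha \in \A(n)$). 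Uniformly handling all four subcases is the primary difficulty.

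For (c) and (d), let $L$ denote the parent-chain length from $\alpha$ to $\treeroot{n}$ and let $U$ denote the number of $\lastzero$ applications along it. Since $\firstone$ and $\lastone$ each decrement the weight by one, $\lastzero$ increments it by one, and $\treeroot{n}$ has weight $3$, weight accounting gives the exact identity $L = 2U + (w(\alpha)-3)$. I plan to obtain termination (c) by ruling out non-trivial cycles in the parent graph: the lex-minimum node of a hypothetical cycle cannot have its parent given by $\firstone$ or $\lastone$ (both strictly decrease the lex value), forcing $\lastzero$, and tracing around the cycle with weight balance and lex monotonicity yields a contradiction. For the height bound (d), it suffices to show $U \leq n - 4 - \lceil(w(\alpha)-3)/2\rceil$, which I plan to derive from the structural consequence of (b) that runs of $\lastzero$ steps must be interrupted once the palindromic constraints no longer persist. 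Combined with $w(\alpha) \leq n-3$, this yields $L < 2(n-4)$.
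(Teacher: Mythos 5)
Your claim (a) is fine, and the observations that $\firstone,\lastone$ strictly decrease lexicographic value and weight while $\lastzero$ strictly increases both are correct, as is the accounting identity $L = 2U + (w(\alpha)-3)$. However, both (c) and (d) have genuine gaps as sketched. For termination you consider the lex-minimum node $\alpha_m$ of a hypothetical cycle and correctly conclude $\parent(\alpha_m) = \lastzero(\alpha_m)$, but then ``tracing around the cycle with weight balance and lex monotonicity'' does not produce a contradiction on its own: in a cycle the weight identity just forces $L = 2U$, and around a cycle the lex increases and decreases can balance; nothing so far forbids arbitrarily long alternating runs, and nothing forbids several consecutive $\lastzero$ steps. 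For the height bound you reduce to $U \leq n-4-\lceil(w(\alpha)-3)/2\rceil$, which you have not derived, and even if you had, when $w(\alpha)-3$ is even it only yields $L \leq 2(n-4)$ rather than the strict inequality claimed. More fundamentally, both (c) and (d) hinge on an unstated structural fact about how often $\lastzero$ can fire, and you give no mechanism to extract it.

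The paper's route supplies exactly that missing ingredient and bypasses the cycle-ruling-out entirely. Its Lemma~\ref{lem:parents} proves \emph{more} than your claim (b): not only is $\lastzero(\alpha) \in \A(n)$ when the first two options fail, but also $\lastone(\lastzero(\alpha)) \in \A(n)$. Since the parent rule prefers $\firstone$ then $\lastone$ over $\lastzero$, this means a $\lastzero$ step is never immediately followed by another $\lastzero$ step. Writing $\alpha = 0^i1\beta$, one checks that $\firstone(\alpha)$ and $\lastone(\alpha)$ both have prefix $0^{i+1}$ (the latter via Remark~\ref{rem:lastone}), while $\lastzero(\alpha)$ preserves the prefix $0^i$ (the flipped $0$ is at index $n-1$ or $n-2$); so after at most two applications of $\parent$ the leading-zero count strictly increases. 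Because the root $\treeroot{n}=0^{n-4}1011$ is the unique element of $\A(n)$ with prefix $0^{n-4}$, this monovariant gives termination immediately (no cycle can preserve a strictly increasing quantity) and the height bound $< 2(n-4)$ at once. If you pursue your weight-based route you will still need to prove, in effect, the ``no two consecutive $\lastzero$'' fact from Lemma~\ref{lem:parents}; at that point the paper's leading-zeros monovariant is the more economical finish.
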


\noindent
Let $\cycletree_n$ denote the cycle-joining tree with nodes $\A(n)$ induced by the parent rule in (\ref{eq:par});   Figure~\ref{fig:cycle9} illustrates $\cycletree_9$. 
The proof of Theorem~\ref{thm:cycle} relies on the following lemma.

\begin{figure}[ht]
    \centering
    \resizebox{4.5in}{!}{\includegraphics{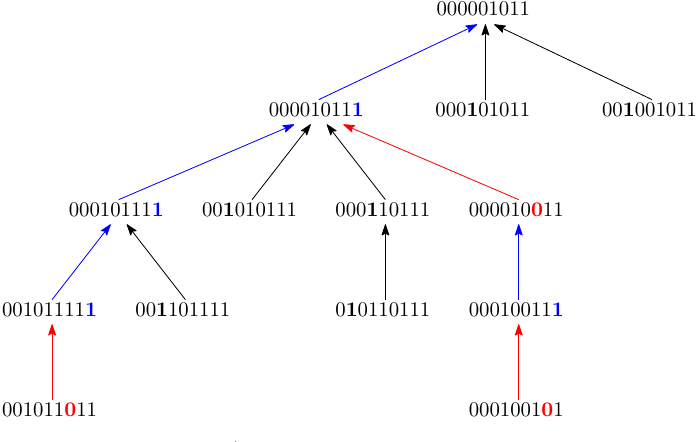}}
    \caption{The cycle-joining tree $\cycletree_9$. The black edges indicate that $\parent(\alpha) = \firstone(\alpha)$; the blue edges indicate that $\parent(\alpha) = \lastone(\alpha)$; the red edges indicate that $\parent(\alpha) = \lastzero(\alpha)$.}
    \label{fig:cycle9}
\end{figure}

\begin{lemma}  \label{lem:parents}
Let $\alpha \neq \treeroot{n}$ be an asymmetric bracelet in $\A(n)$.  
If neither $\firstone(\alpha)$ nor $\lastone(\alpha)$ are in $\A(n)$, then
the last $0$ in $\alpha$ is at index $n{-}2$ or $n{-}1$, and both $\lastzero(\alpha)$ and $\lastone(\lastzero(\alpha))$ are in $\A(n)$.  
\end{lemma}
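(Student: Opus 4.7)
The plan begins with a factorization. Since $\alpha \in \A(n)$ starts with $0$ and ends with $1$ and is not of the form $0^a 1^b$ (which would be symmetric), it factors as $\alpha = 0^s \sigma 0^m 1^k$ with $s,m,k \geq 1$, where $\sigma$ starts and ends with $1$ and all internal $0$-runs of $\sigma$ together with the middle $0^m$ are bounded by $s$ (since $\alpha$ is a necklace). Let $p$ denote the leading $1$-run of $\sigma$; a position-by-position comparison of $\sigma 0^m 1^k$ against $1^k 0^m \sigma^R$, which must be strict because $\alpha \in \A(n)$, immediately yields $p \leq k$. Throughout I will use the observation that for any necklace of the form $0^{s'} \beta$ in which $\beta$ begins and ends with $1$ and has internal $0$-runs strictly shorter than $s'$, the necklace of its reversal is precisely $0^{s'} \beta^R$, so membership in $\A(n)$ is equivalent to $\beta < \beta^R$ lexicographically.

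For the position of the last $0$ I will argue the contrapositive. Assume $k \geq 3$. Writing $\sigma = 1^p \rho$, compute $\firstone(\alpha) = 0^{s+1} 1^{p-1} \rho 0^m 1^k$ and, via Remark~\ref{rem:lastone} with $t = 0$ (which applies since $k \geq 2$), $\lastone(\alpha) = 0^{s+1} \sigma 0^m 1^{k-1}$; both fit the reduced form above. If $p \geq 2$, the body of $\firstone(\alpha)$ has leading $1$-run $p - 1 < k$, and the position-$p$ comparison against its reverse (which starts with $1^k$) yields $\firstone(\alpha) \in \A(n)$. If $p = 1$, the body of $\lastone(\alpha)$ has leading $1$-run exactly $1$ followed by a $0$, while its reverse starts with $1^{k-1}$ and $k-1 \geq 2$; the position-$2$ comparison yields $\lastone(\alpha) \in \A(n)$. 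Either outcome contradicts the hypothesis, so $k \in \{1, 2\}$, i.e., the last $0$ of $\alpha$ lies at index $n-1$ or $n-2$.

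For the remaining two claims, a direct calculation (with Remark~\ref{rem:lastone} invoked again for the second) gives
\[
\lastzero(\alpha) = 0^s \sigma 0^{m-1} 1^{k+1} \qquad \text{and} \qquad \lastone(\lastzero(\alpha)) = 0^{s+1} \sigma 0^{m-1} 1^k.
\]
Both strings are necklaces because $m-1 < m \leq s$ keeps the leading $0$-run strictly longest. The body of $\lastzero(\alpha)$ has leading $1$-run $p \leq k < k+1$, so its body-versus-reverse comparison resolves at position $p+1$, establishing $\lastzero(\alpha) \in \A(n)$; the degenerate case of the body being all $1$s would require $\sigma = 1^p$ together with $m = 1$, which combined with $k \in \{1,2\}$ and $p \leq k$ forces $\alpha = \treeroot{n}$, excluded by hypothesis. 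For $\lastone(\lastzero(\alpha)) \in \A(n)$ the body's leading $1$-run is also $p$, and when $p < k$ the body-versus-reverse comparison again resolves at position $p+1$. I expect the main obstacle to be the tie case $p = k$: writing $\sigma = 1^k 0^q \rho'$ with $q \geq 1$ and $\rho'$ beginning and ending with $1$, a sharper unpacking of the strict inequality $\sigma 0^m 1^k < 1^k 0^m \sigma^R$ witnessing $\alpha \in \A(n)$ forces $q \geq m$, hence $q > m-1$; the body-versus-reverse comparison for $\lastone(\lastzero(\alpha))$ then resolves at position $m$ inside the right-hand $0^{m-1}$ block (where the left still has a $0$ while the right has the first character of $(\rho')^R$, which is $1$), giving $\lastone(\lastzero(\alpha)) \in \A(n)$.
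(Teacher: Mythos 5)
Your overall strategy is parallel to the paper's (factor $\alpha$ around the trailing $0$-block, argue the contrapositive for the position of the last $0$, then handle $\lastzero$ and $\lastone(\lastzero)$ separately), and the parts covering the position of the last $0$ and the membership of $\lastone(\lastzero(\alpha))$ are essentially sound. However, there is a real gap in the step establishing $\lastzero(\alpha)\in\A(n)$.

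Your key tool is the observation that for a necklace $0^{s'}\beta$ whose internal $0$-runs are \emph{strictly shorter} than $s'$, membership in $\A(n)$ is equivalent to $\beta<\beta^R$. This hypothesis is satisfied wherever you invoke it for $\firstone(\alpha)$, $\lastone(\alpha)$, and $\lastone(\lastzero(\alpha))$, because each of those strings has leading run $0^{s+1}$ while every other run is at most $s$. But $\lastzero(\alpha)=0^s\sigma 0^{m-1}1^{k+1}$ has leading run exactly $0^s$, and nothing rules out $\sigma$ itself containing a run of $0^s$ (only runs of $0^{s+1}$ are forbidden by $\alpha$ being a necklace). In that situation the necklace of $\lastzero(\alpha)^R$ need not be $0^s\cdot\mathrm{body}^R$: it could be a smaller rotation starting at an internal $0^s$ block, and then $\mathrm{body}<\mathrm{body}^R$ no longer certifies $\lastzero(\alpha)<\mathrm{necklace}(\lastzero(\alpha)^R)$. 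The same issue undercuts your one-line justification that $\lastzero(\alpha)$ is a necklace ``because $m-1<m\leq s$ keeps the leading $0$-run strictly longest'' --- the $0^{m-1}$ block is strictly shorter than $s$, but a $0^s$ run inside $\sigma$ would tie the lead. (The conclusion that $\lastzero(\alpha)$ is a necklace is still true; it follows from the general fact, used in the paper's Section~\ref{sec:parent}, that flipping the last $0$ of a necklace yields a necklace --- but that is a different argument than the one you gave.)

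The paper's proof sidesteps the problem entirely: it first asserts that $\lastzero(\alpha)=x1y$ is a bracelet (because $\alpha$ is), and then shows it cannot be symmetric by invoking Lemma~\ref{lem:pal} --- if it were symmetric it would factor into two palindromes $\beta_1\beta_2$, and flipping the relevant $1$ back to a $0$ would force $\alpha$ to be at least the necklace of $[\alpha^R]$, contradicting $\alpha\in\A(n)$. That argument requires no hypothesis on the internal $0$-runs of $\sigma$. To repair your proof you would need either to prove that $\sigma$ cannot contain $0^s$ under the lemma's hypotheses (I could not find a counterexample, but you have not established this), or to replace your ``observation''-based conclusion for $\lastzero(\alpha)$ with an argument along the paper's lines that works when $\mathrm{body}$ may contain $0^s$.
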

\begin{proof}
Since $\alpha$ is an asymmetric bracelet, it must have the form $\alpha = 0^i1\beta 01^j$ where $i,j \geq 1$ and $\beta 0$ does not contain $0^{i+1}$ as a substring.  Furthermore, $1\beta 01^j < (1 \beta01^j)^R$, which implies 
$\beta 01^{j-1} < (\beta 01^{j-1})^R$.

\noindent
{\bf Suppose $j > 2$.} Since $\lastone(\alpha) = 0^{i+1}1 \beta 01^{j-1}$ is not an asymmetric bracelet, we have $1 \beta 01^{j-1} \geq (1 \beta 01^{j-1})^R$.  
Thus, $\beta$ begins with $1$.  
Since $\firstone(\alpha) = 0^{i+1}\beta 01^j$ is not an asymmetric bracelet, Lemma~\ref{lem:pal} implies $\beta 01^j \geq (\beta 01^j)^R$, contradicting the earlier observation that $\beta 01^{j-1} < (\beta 01^{j-1})^R$. Thus, the last $0$ in $\alpha$ is at index $n{-}2$ or $n{-}1$.

\noindent
{\bf Suppose $j=1$ or $j=2$.}  Then the last $0$ in $\alpha$ must be at position $n{-}2$ or $n{-}1$. Write $\alpha = x0y$ where $y=1$ or $y=11$. Since $\alpha$ is a bracelet, it is straightforward to see that $\lastzero(\alpha)=x1y$ is also a bracelet.  If it is symmetric,  Lemma~\ref{lem:pal} implies there exist palindromes $\beta_1$ and $\beta_2$  such that 
 $\lastzero(\alpha)=x1y=\beta_1\beta_2$.  However, flipping the $1$ in $x1y$ that allows us to obtain $\alpha$ implies that $\alpha$ is greater than or equal to the necklace in $[\alpha^R]$, contradicting the assumption that $\alpha$ is an
asymmetric bracelet. Thus, $\lastzero(\alpha)$ is an asymmetric bracelet.  

Consider $\lastone(\lastzero(\alpha))
 = 0^{i+1}1\beta 1^j$.   Let
 $\beta = \tt{b}_1\tt{b}_2\cdots \tt{b}_m$.  
Suppose that $m=0$. Then $\lastone(\lastzero(\alpha))
 = 0^{i+1} 1^{j+1} \Rightarrow \lastzero(\alpha) = 0^{i}1^{j+2}$. Since $j=1$ or $j=2$, we have that $\lastzero(\alpha) = 0^i 111$ or $\lastzero(\alpha) = 0^i1111$. Now $\alpha$ is the result of flipping one of the $1$s in $\lastzero(\alpha)$ to a $0$ and performing the appropriate rotation. But in every case, we end up with $\alpha$ being a symmetric necklace, a contradiction.  Thus, assume $m \geq 1$. 
Suppose $\beta = 1^m$. Then, $\alpha$ is not an asymmetric bracelet, a contradiction.
Suppose $\beta = 0^m$.  If $j=1$, then $\alpha$ is symmetric, a contradiction; if $j=2$, then $\lastone(\lastzero(\alpha)) = 0^{i+1}10^{m}11$ which is in $\A(n)$.
For all other cases, $\beta $ contains at least one $1$ and at least one $0$; $m \geq 2$.
 Since $\beta$ does not contain $0^{i+1}$ as a substring, by Lemma~\ref{lem:pal}, we must show that \blue{(i) $\beta 1^{j-1}$ is less than its reversal  $1^{j-1} \beta^R$},    
recalling that \blue{(ii) $\beta 01^{j-1}$ is less than its reversal $1^{j-1}0\beta^R$}.  Let $\ell$ be the largest index of $\beta$ such that $\tt{b}_{\ell} = 1$. Then $\tt{b}_{\ell+1}\cdots \tt{b}_m=0^{m-\ell}$; note that $\tt{b}_{\ell+1}\cdots \tt{b}_m$ is the empty string when $\ell=m$.
Suppose $j=1$.  From \blue{(ii)}, we have $\tt{b}_1 = 0$ and $\tt{b}_2\cdots \tt{b}_{\ell-1} 1 0^{m-\ell} < 0^{m-\ell}1 \tt{b}_{\ell-1}\cdots \tt{b}_2$. But this implies that $\tt{b}_2\cdots \tt{b}_{m-\ell+1} = 0^{m-\ell}$. Therefore, we have $\beta = 0^{m-\ell+1} \tt{b}_{m-\ell+2} \cdots \tt{b}_m < 0^{m-\ell}1\tt{b}_{\ell-1}\cdots \tt{b}_1 = \beta^R$, hence \blue{(i)} is satisfied. 
Suppose $j=2$.  If $\tt{b}_1 = 0$, then \blue{(i)} is satisfied. Otherwise $\tt{b}_1=1$ and from \blue{(ii)} $\tt{b}_2=0$. From \blue{(ii)}, we get that $\tt{b}_3 \cdots  \tt{b}_{\ell-1}1 0^{m-\ell} < 0^{m-\ell}\tt{b}_{\ell-1}\cdots \tt{b}_3 $. This inequality implies that $\tt{b}_3 \cdots \tt{b}_{m-\ell+2} = 0^{m-\ell}$. Therefore, we have $\beta 1 = 10^{m-\ell+1} \tt{b}_{m-\ell+3}\cdots \tt{b}_m1 < 10^{m-\ell} 1\tt{b}_{\ell-1}\cdots \tt{b}_1 = 1\beta^R$, hence \blue{(i)} is satisfied.
 Thus, $\lastone(\lastzero(\alpha))$ is an asymmetric bracelet.
\end{proof}

\noindent
{\bf Proof of Theorem~\ref{thm:cycle}.}
Let $\alpha$ be an asymmetric bracelet in $\A(n) \setminus \{\treeroot{n}\}$. We can write $\alpha$ as $0^i1\beta$ for some string $\beta$ and $i \geq 1$.
We demonstrate that the parent rule $\parent$ from (\ref{eq:par}) induces a path from $\alpha$ to $\treeroot{n}$, i.e., there exists an integer $j$ such that $\parent^j(\alpha) = \treeroot{n}$.  Note that $\treeroot{n}$ is the unique asymmetric bracelet with prefix $0^{n-4}$. 
By Lemma~\ref{lem:parents}, $\parent(\alpha) \in \A(n)$.  In the first two cases of the parent rule,  $\parent(\alpha)$ will have prefix $0^{i+1}$.
If the third case applies, Lemma~\ref{lem:parents} states that $\lastone( \lastzero(\alpha))$ is an asymmetric bracelet.  Thus, $\parent(\parent(\alpha))$ is either 
$\firstone( \lastzero(\alpha))$
or $\lastone( \lastzero(\alpha))$; in each case the resulting asymmetric bracelet has prefix $0^{i+1}$.  Since either $\parent(\alpha)$ or $\parent(\parent(\alpha))$ has prefix $0^{i+1}$,  the parent rule induces a path from $\alpha$ to $\treeroot{n}$ and the height of the resulting tree is at most $2(n-4)-1$.


%
%

\subsection{A successor rule} \label{sec:succ}


Each application of the parent rule $\parent(\alpha)$ in (\ref{eq:par}) corresponds to a conjugate pair.  For instance, consider the asymmetric bracelet $\alpha = 00010111\bblue{1}$.  The parent of $\alpha$ is obtained by flipping the last $1$ to obtain $00010111\bblue{0}$ (see Figure~\ref{fig:cycle9}).  The corresponding conjugate pair is $(\bblue{1}00010111, \bblue{0}00010111)$.  Let $\C(n)$ denote the set of all strings belonging to a conjugate pair in the cycle-joining tree $\cycletree_n$.  Then the following is a successor rule for an  $\OS(n)$, given $\alpha = \tt{a}_1\tt{a}_2\cdots \tt{a}_n \in \mathbf{S}(n)$:
\[f(\alpha) = \left\{ \begin{array}{ll}
         \overline{\tt{a}}_1 & \ \ \mbox{if $\tt{a}_1\tt{a}_2\cdots \tt{a}_n \in \C(n)$;}\\
         {\tt{a}_1} \  &\ \  \mbox{otherwise.}\end{array} \right.\]
%
For example, if $\C(9)$ corresponds to the conjugate pairs to create the cycle-joining tree $\cycletree_9$ shown in Figure~\ref{fig:cycle9}, then the corresponding universal cycle is:
\begin{center}
\begin{tabular}{l}
$0\underline{\bblue{0}0001011
11}1001011
011001011
110011011
1\underline{\bblue{1}0001011
1}00101011
100011011
$ \\
$101011011 
100001001
110001001
010001001
100001011
001001011 
000101011,$ 
\end{tabular}
\end{center}
where the two underlined strings belong to the conjugate pair $(\bblue{1}00010111, \bblue{0}00010111)$.
In general, this rule requires exponential space to store the set $\C(n)$.  However, in some cases, it is possible to test whether a string is in $\C(n)$ without pre-computing and storing $\C(n)$. In our successor rule for an $\OS(n)$, we use Lemma~\ref{theorem:braceletTest} to avoid pre-computing and storing $\C(n)$, thereby reducing the space requirement from exponential in $n$ to linear in $n$.



\begin{result}
\noindent
{\bf Successor-rule $g$ to construct an $\OS(n)$ of length $L_n$} 

\medskip

\noindent
Let $\alpha = \tt{a}_1\tt{a}_2\cdots \tt{a}_n \in \Set(n)$ and let 
\begin{itemize}
\item $\beta_1 = 0^{n-i}\mathbf{1}\tt{a}_2\cdots \tt{a}_i$ where $i$ is the largest index of $\alpha$ such that $\tt{a}_i = 1$ (First-$1$); 
\item $\beta_2 = \tt{a}_2\tt{a}_3\cdots \tt{a}_n\bblue{1}$ (Last-$1$); 
\item  $\beta_3 =  \tt{a}_j\tt{a}_{j+1}\cdots \tt{a}_n \rred{0} 1^{j-2}$ where $j$ is the smallest index of $\alpha$ such that $\tt{a}_j = 0$ and $j > 1$ (Last-$0$).
\end{itemize}
Let

$g(\alpha) = \left\{ \begin{array}{ll}
         \overline{\tt{a}}_1 & \ \ \mbox{if $\beta_1$ and $\firstone(\beta_1)$ are in $\A(n)$;}\\
         \overline{\tt{a}}_1 &\ \  \mbox{if $\beta_2$ and $\lastone(\beta_2)$ are in $\A(n)$, and $\firstone(\beta_2)$ is not in $\A(n)$;}\\
         \overline{\tt{a}}_1 &\ \  \mbox{if $\beta_3$ and $\lastzero(\beta_3)$ are in $\A(n)$, and neither $\firstone(\beta_3)$ nor $\lastone(\beta_3)$ are in $\A(n)$;}\\
         {\tt{a}_1} \  &\ \  \mbox{otherwise.}\end{array} \right.$
\end{result}

\noindent
Starting with any string in $\alpha \in \Set(n)$, we can repeatedly apply $g(\alpha)$ to obtain the next bit in a universal cycle for $\Set(n)$. 

\begin{exam}
The following orientable sequences for $n=6,7,8$ with lengths 6, 14, and 48 respectively are generated by applying the successor rule $g$: 
\begin{itemize}
   \item $n=6$: $001011$, 
   \item $n=7$: $00010111001011$, and
   \item $n=8$:  $000010111100101110011011100010011000101100101011$.
\end{itemize}

\vspace{-0.05in}
\end{exam}

\begin{theorem}
For $n \geq 6$, the function $g$ is a successor rule that generates an $\OS(n)$ with length $L_n$ for the set $\Set(n)$ in $O(n)$-time per bit using $O(n)$ space.
\end{theorem}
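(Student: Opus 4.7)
The plan is to show that $g$ coincides with the standard successor rule (described in Section~\ref{sec:cycle}) associated with the cycle-joining tree $\cycletree_n$ from Theorem~\ref{thm:cycle}. Letting $\C(n)$ denote the set of strings belonging to a conjugate pair used in $\cycletree_n$, the key claim is
\[
g(\alpha) = \overline{\tt{a}}_1 \iff \alpha \in \C(n).
\]
Once this equivalence is established, correctness is immediate: the sequence produced by $g$ is a universal cycle for $\Set(n)$, and since no member of $\Set(n)$ has its reversal also in $\Set(n)$ (because $\Set(n)$ is the disjoint union of the rotation classes of asymmetric bracelets), that universal cycle is by definition an $\OS(n)$; moreover, its length is $|\Set(n)| = L_n$ by the formula recalled in Section~\ref{sec:bounds}.

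To prove the key claim, I would first write out the explicit form of each conjugate pair in $\cycletree_n$. For an asymmetric bracelet $v = 0^{i-1}1\tt{v}_{i+1}\cdots\tt{v}_n$ whose parent is $\firstone(v)$, rotating $v$ so that the flipped bit sits at position $1$ yields the conjugate pair $(1\tt{v}_{i+1}\cdots\tt{v}_n 0^{i-1},\, 0\tt{v}_{i+1}\cdots\tt{v}_n 0^{i-1})$; analogous pairs arise for $\lastone$-edges and $\lastzero$-edges. Given $\alpha = \tt{a}_1\tt{a}_2\cdots\tt{a}_n$, I would then invert these expressions. Because $v$ must begin with $0$ and end with $1$, both members of a $\firstone$-pair share the suffix $0^{i-1}$ and place the last $1$ of $\alpha$ at position $n-i+1$; taking the paper's index $i$ to be that position of the last $1$ in $\alpha$ and forcing the bit at position $1$ of $v$ to take the necklace value shows $v = \beta_1$. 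A parallel argument reconstructs $\beta_2$ from $\alpha$'s last bit and $\beta_3$ from the first $0$ after position $1$ of $\alpha$. Because the parent rule (\ref{eq:par}) assigns $\firstone$ the highest priority, then $\lastone$, then $\lastzero$, the side conditions ``$\firstone(\beta_2) \notin \A(n)$'' and ``neither $\firstone(\beta_3)$ nor $\lastone(\beta_3)$ lies in $\A(n)$'' in $g$ mirror this priority and guarantee that the three flipping cases are mutually exclusive and jointly exhaust $\C(n)$.

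The complexity bound is straightforward: computing $g(\alpha)$ builds a constant number of length-$n$ candidate strings ($\beta_i$ together with $\firstone(\beta_i)$, $\lastone(\beta_i)$, or $\lastzero(\beta_i)$) in $O(n)$ time and $O(n)$ space, and by Theorem~\ref{theorem:braceletTest} each membership test in $\A(n)$ runs in $O(n)$ time and $O(n)$ space. The main obstacle is the correspondence step above, namely verifying that a single expression $\beta_i$ recovers the correct parent-side asymmetric bracelet from \emph{both} members of the conjugate pair; this requires a short case split on whether $\tt{a}_1$ agrees with the corresponding bit of $v$ or with its flip. Once this correspondence is pinned down, the remaining verifications are routine.
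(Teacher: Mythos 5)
Your proposal is correct and follows essentially the same route as the paper's proof: both establish that $g(\alpha) = \overline{\tt{a}}_1$ if and only if $\alpha$ belongs to some conjugate pair used in $\cycletree_n$ by matching the three cases of $g$ (via $\beta_1,\beta_2,\beta_3$) to the three-tier priority of the parent rule, and then invoke the generic cycle-joining correctness together with Theorem~\ref{theorem:braceletTest} for the $O(n)$ time and space bounds. The only cosmetic wrinkle is the phrase ``parent-side asymmetric bracelet'' --- each $\beta_i$ actually reconstructs the \emph{child}-side node of the would-be conjugate pair (its parent being $\firstone(\beta_1)$, $\lastone(\beta_2)$, or $\lastzero(\beta_3)$ respectively); the substance of what you check (that $\beta_i$ is determined by $\tt{a}_2\cdots\tt{a}_n$ alone and hence is independent of which member of the pair $\alpha$ is) is exactly right.
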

\begin{proof}
Consider $\alpha = \tt{a}_1\tt{a}_2\cdots \tt{a}_n \in \Set(n)$.  
If $\alpha$ belongs to some conjugate pair in $\cycletree_n$, then 
it must satisfy one of three possibilities stepping through the parent rule in \ref{eq:par}:
\begin{itemize}
    \item Both $\beta_1$ and $\firstone(\beta_1)$ must be in $\A(n)$.  Note, $\beta_1$ is a rotation of $\alpha$ when $\tt{a}_1=1$, where $\tt{a}_1$ corresponds to the first one in $\beta_1$.
       \item  Both $\beta_2$ and $\lastone(\beta_2)$ must both be in $\A(n)$, but additionally, $\firstone(\beta_2)$ can not be in $\A(n)$.  Note, $\beta_2$ is a rotation of $\alpha$ when $\tt{a}_1=1$, where $\tt{a}_1$ corresponds to the last one in $\beta_2$.

       \item  Both $\beta_3$ and $\lastzero(\beta_3)$ must both be in $\A(n)$, but additionally, both $\firstone(\beta_3)$ and $\lastone(\beta_3)$ can not be in $\A(n)$. 
       Note, $\beta_3$ is a rotation of $\alpha$ when $\tt{a}_1=0$, where $\tt{a}_1$ corresponds to the last zero in $\beta_3$.      
\end{itemize}
Thus, $g$ is a successor rule on $\Set(n)$ that generates a cycle of length $|\Set(n)| = L_n$. By Lemma~\ref{theorem:braceletTest}, one can determine whether a string is in $\A(n)$ in $O(n)$ time using $O(n)$ space. Since there are a constant number of tests required by each case of $g$, the corresponding $\OS(n)$ can be computed in $O(n)$-time per bit using $O(n)$ space.

\end{proof}

\section{Periodic nodes in $\cycletree_n$} \label{sec:periodic}

In this section, we present several results on periodic nodes in 
$\cycletree_n$, assuming $n \geq 6$.

\begin{lemma}\label{lem:periodic}
If a node $\alpha \in \A(n)$ from the cycle-joining tree $\cycletree_n$ is periodic, it has no children.
\end{lemma}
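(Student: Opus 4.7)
Assume toward contradiction that $\alpha = \tau^k \in \A(n)$ is periodic (with $k \geq 2$, $|\tau| = d$) and that some $\gamma \in \A(n)$ satisfies $\parent(\gamma) = \alpha$. Two structural observations set the stage. Since $\alpha$ is a necklace, so is $\tau = \ap(\alpha)$, and because $\tau$ is a necklace starting with $0$ it must end in $1$: otherwise the cyclic rotation bringing its trailing $0$ to the front would yield a lex-smaller rotation of $\tau$. Write $\tau = 0^i 1 \eta$ with $i \geq 1$. By~(\ref{eq:par}), $\gamma$ and $\alpha$ differ (as cyclic strings) by a single flip at some within-block position $r' \in \{1, \ldots, d\}$; let $\tilde\tau$ denote $\tau$ with position $r'$ flipped. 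The periodicity of $\alpha$ then implies that the $k$ rotations of $\gamma$ by multiples of $d$ are precisely the strings $\tau^a \tilde\tau \tau^{k-1-a}$ for $a \in \{0, \ldots, k-1\}$.

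The plan is to contradict each of the three branches of the parent rule in turn. For case (a), $\firstone(\gamma) = \alpha$: if $j$ is the position of the first $1$ in $\gamma$, then $\gamma$ and $\alpha$ differ only at position $j \leq i$. Rotating $\gamma$ by $d$ yields $\tau^{k-1}\tilde\tau$, which starts with $\tau = 0^i 1 \eta$ and therefore has $i$ leading zeros; but $\gamma$ itself has only $j-1 < i$ leading zeros, contradicting $\gamma$ being a necklace. For case (b), $\lastone(\gamma) = \alpha$: then $\gamma$ ends in $1$ and $\gamma_1 \cdots \gamma_{n-1} 0 \in [\alpha]$, so the cyclic flip is $0 \to 1$ at some $r' \neq d$ (using $\tau_d = 1$). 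Since $\tilde\tau > \tau$, the lex-smallest periodic rotation is $\tau^{k-1}\tilde\tau$, and in this representation the flipped bit lies at position $(k-1)d + r' < n$. Consequently $\gamma$'s last bit is the unchanged $\tau_d = 1$, and so $\gamma_1 \cdots \gamma_{n-1} 0$ differs from $\alpha$ at two positions — but any nontrivial rotation of the periodic $\alpha$ differs from $\alpha$ at a multiple of $k$ positions, in fact at least $2k \geq 4$, because $\tau$ is primitive and any rotation of $\tau$ by $r \in \{1, \ldots, d-1\}$ has the same weight as $\tau$ (hence positive even Hamming distance $\geq 2$ from $\tau$). So $\gamma_1 \cdots \gamma_{n-1} 0 \notin [\alpha]$, a contradiction.

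For case (c), $\lastzero(\gamma) = \alpha$: the flip is $1 \to 0$, so $\tilde\tau < \tau$ and the lex-smallest periodic rotation is $\tilde\tau \tau^{k-1}$. Writing $t$ for the number of trailing $1$s of $\tau$, the last $0$ in $\gamma = \tilde\tau \tau^{k-1}$ lies inside the final $\tau$-block at position $n - t > d \geq r'$, i.e., strictly after the unique position where $\gamma$ differs from $\alpha$. So $\lastzero(\gamma)$ flips the ``wrong'' bit and yields a string that differs from $\alpha$ at two positions — contradicting $\lastzero(\gamma) = \alpha$.

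The main technical point to nail down is that in each case the identified periodic rotation is really the lex-smallest rotation of $\gamma$'s class overall, not merely among the $k$ periodic ones. This should follow from $\tau$ being a necklace itself: any rotation of $\gamma$ that starts inside a block at relative position $p' > 1$ begins with a prefix of $\tau[p':d]$ (or $\tilde\tau[p':d]$) whose leading-zero count is at most $i - p' + 1 < i$, losing the comparison to the periodic rotation that begins cleanly with $\tau$ (or, in case (c), with $\tilde\tau$, which may have even more leading zeros).
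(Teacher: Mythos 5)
Your approach is genuinely different from the paper's: the paper fixes an arbitrary non-root $\alpha$ and shows directly that $\parent(\alpha)$ is always aperiodic (each of $\firstone(\alpha)$ and $\lastone(\alpha)$ has a unique longest $0$-run as its prefix, and for $\lastzero(\alpha)$ a short factorization argument rules out periodicity), whereas you fix a periodic $\alpha=\tau^k$ and derive a contradiction from assuming a child $\gamma$ exists. This is a legitimate alternative angle, and your case (a) is complete and correct.

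Cases (b) and (c), however, hinge on the claim you yourself flag as open --- that the necklace $\gamma$ equals the lex-least \emph{periodic} rotation $\tau^{k-1}\tilde\tau$ (resp.\ $\tilde\tau\tau^{k-1}$) --- and your sketch for closing it does not hold up. You claim a rotation of $\gamma$ starting mid-block at relative position $p'>1$ has at most $i-p'+1<i$ leading zeros; but for $p'>i+1$ this bound is negative, while the actual leading-zero count can be as large as $i$ whenever $\tau$ has an internal occurrence of $0^i$ (which a necklace with prefix $0^i1$ may well have, e.g.\ $\tau=0010011$). So ties must be broken beyond the first $i$ symbols, and that comparison is not carried out. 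Separately, your case (c) is set up backwards: $\lastzero$ applies no rotation, and the last $0$ of $\gamma$ is forced to sit \emph{after} the last $0$ of $\alpha$, i.e.\ in the final block, so $\gamma$ is forced to be $\tau^{k-1}\tilde\tau$ with $\tilde\tau<\tau$; this is immediately not a necklace because its rotation $\tilde\tau\tau^{k-1}$ is strictly smaller, giving the contradiction with no technical claim needed. Case (b) can also be closed without the claim: periodicity ($k\geq 2$) guarantees $\gamma$ has exactly $i$ leading zeros, so $\gamma_1\cdots\gamma_{n-1}0$ begins $0^i1$ and ends in $0=\alpha_r$; for $r>0$ this exhibits $\alpha_r\alpha_{r+1}\cdots\alpha_{r+i}=0^{i+1}$ in $\alpha$, and for $r=0$ it forces $\alpha_n=0$, both impossible. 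As written, your proof has a real gap, but the overall strategy is sound and the two cases are repairable along these lines; note that even after repair the argument is noticeably longer than the paper's.
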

\begin{proof}
Let $\alpha$ be a non-root node  $\A(n)$.
We demonstrate that $\parent(\alpha)$ is aperiodic, which implies the periodic nodes in $\cycletree_n$ have no children. 
Let $j$ denote the index of the first $1$ in $\alpha$. Then $\alpha$ has prefix $0^{j-1}$ and no substring $0^j$.   Consider the three possibilities for $\parent(\alpha)$.
Suppose $\firstone(\alpha)$ is in $\A(n)$. Then it has prefix $0^j$ and is aperiodic since there is no substring $0^j$ not in the initial prefix of $0$s.  Similarly, if $\lastone(\alpha)$ is in $\A(n)$, then it has prefix $0^j$ and is aperiodic since it also has no substring $0^j$ not in the initial prefix of $0$s.
Suppose $\lastzero(\alpha)$ is in $\A(n)$ and is periodic. Then we can write $\lastzero(\alpha) = \beta^{k}$ where $k>1$ and $\beta$ is some string that contains a $1$. Either $\beta$ contains a $0$, or it does not. If $\beta$ does not contain a $0$, then $\beta = 1^i$ for some $i\geq 1$. But this implies $\alpha = 01^{n-1}$, which is not an asymmetric bracelet, a contradiction. Suppose $\beta$ contains at least one $0$. Write $\alpha = uv = yx$ where $u$, $v$, $x$, $y$ are nonempty strings such that $|u|=|x| = |\beta|$. 
 Since $\beta$ contains at least one $0$, the last $0$ in $\alpha$ must occur in $x$ and we must have $u=\beta$. Thus, one can obtain $x$ from $\beta$ by flipping a single $1$ to a $0$, which implies $x < \beta$. So we have  $xy < \beta v = uv = yx= \alpha$, which contradicts $\alpha$ being a bracelet. 
Therefore $\parent(\alpha)$ is aperiodic.
\end{proof}

\begin{lemma} \label{lem:aperiodic}
    The number of periodic nodes in $\cycletree_n$ is less than or equal to the number of aperiodic nodes in $\cycletree_n$.
\end{lemma}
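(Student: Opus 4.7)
The plan is to prove $|P|\le|A|$, where $P$ and $A$ denote the sets of periodic and aperiodic nodes of $\cycletree_n$ respectively, by constructing an explicit injection $\phi\colon P\to A$.

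First, I would characterize $P$. Every $\alpha\in P$ factors uniquely as $\alpha=\gamma^{k}$, where $k\ge 2$ divides $n$ and $\gamma=\ap(\alpha)$ has length $d=n/k$; the nontrivial part is showing that $\gamma$ inherits the asymmetric-bracelet property from $\alpha$. If $\gamma^R\in[\gamma]$, then $\alpha^R=(\gamma^R)^k\in[\gamma^k]=[\alpha]$, contradicting $\alpha\in\A(n)$. Likewise, if some $\delta\in[\gamma]\cup[\gamma^R]$ satisfied $\delta<\gamma$, then $\delta^k\in[\alpha]\cup[\alpha^R]$ would satisfy $\delta^k<\gamma^k=\alpha$, contradicting that $\alpha$ is a bracelet. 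So $\gamma$ is an aperiodic asymmetric bracelet of length $d$, giving $|P|=\sum_{d\mid n,\, d<n}|A_d|$, where $A_d$ denotes the aperiodic elements of $\A(d)$.

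Next, I would define $\phi$. Given $\alpha=\gamma^{k}$ with $\gamma=0^{j-1}1\beta$, let $\phi(\alpha)$ be the necklace of $\gamma^{k-1}(0^{j}\beta)$; that is, the string obtained from $\alpha$ by flipping the first $1$ in its final copy of $\gamma$. Because $\gamma$ is a necklace starting with $0^{j-1}1$, its longest zero-run is $0^{j-1}$, so the flip creates a unique zero-run of length exactly $j$. Consequently, the lex-smallest rotation of this string is $0^{j}\beta\gamma^{k-1}$, and it is aperiodic (the unique long zero-run rules out any proper period). Injectivity is clear: from $\phi(\alpha)$ one reads off $j$ as the length of the leading zero-run, then $d$ as the shortest period of the length-$(n-d)$ suffix, and hence $\gamma$ and $k=n/d$.

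The main obstacle I expect is verifying that $\phi(\alpha)$ is asymmetric rather than merely a bracelet. The reversal $(\gamma^R)^{k-1}\beta^R 0^{j}$ likewise has a trailing zero-run of length at least $j$, and when $\beta$ begins with zeros both necklaces will share a leading zero-run that is \emph{strictly longer} than $j$; the lex comparison then has to be pushed several positions into the string. The key ingredient I would use is that $\gamma$ is itself an asymmetric bracelet, so $\gamma<\delta$ for every $\delta\in[\gamma^R]$; this strict inequality can be propagated through the $k-1$ trailing copies of $\gamma$ in $\phi(\alpha)$ (which are matched against copies of $\gamma^R$ in the necklace of the reversal) to force $\phi(\alpha)$ to be strictly smaller than the necklace of its reversal, and hence lie in $A$.
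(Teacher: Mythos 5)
Your overall strategy (an explicit injection from the periodic nodes of $\cycletree_n$ into the aperiodic ones) matches the paper's, and your preliminary observation that every periodic $\alpha \in \A(n)$ factors as $\gamma^k$ with $\gamma = \ap(\alpha)$ itself an aperiodic asymmetric bracelet is correct and well argued. However, your particular map $\phi$ is \emph{not} the one used in the paper, and the two places where you wave your hands are exactly where the paper's choice of map makes the work disappear.

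\textbf{Gap 1 (injectivity).} You claim that flipping the first $1$ in the last copy of $\gamma=0^{j-1}1\beta$ creates a zero-run of length exactly $j$, and then recover $j$ as the leading zero-run of $\phi(\alpha)$. This is false whenever $\beta$ itself begins with zeros. Writing $\ell\ge 0$ for the number of leading zeros of $\beta$, the run created by the flip has length $j+\ell$, since $\gamma$ ends in $1$ and the flipped $1$ is immediately followed by the $0^\ell$ prefix of $\beta$. Concretely, take $\gamma=001011\in\A(6)$ and $\alpha=\gamma^2\in\A(12)$: here $j=3$, $\beta=011$, $\ell=1$, and $\phi(\alpha)=000011001011$ has leading zero-run of length $4$, not $3$. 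So the recovery ``read off $j$ as the leading zero-run'' breaks, and your stated decoding procedure (which also refers circularly to ``the length-$(n-d)$ suffix'' before $d$ is known) does not establish injectivity. The paper's map $f(\gamma^k)=0^{d-1}1\gamma^{k-1}$ (replacing the \emph{entire} first period with $0^{d-1}1$, not a single bit flip) is designed precisely so that the leading zero-run of the image is always exactly $d-1$, making injectivity immediate.

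\textbf{Gap 2 (asymmetry).} You leave the verification that $\phi(\alpha)\in\A(n)$ as a sketch, proposing to ``propagate'' $\gamma<\delta$ for $\delta\in[\gamma^R]$ through the trailing copies of $\gamma$, ``matched against copies of $\gamma^R$ in the necklace of the reversal.'' But there is no such alignment: in $\phi(\alpha)=0^{j+\ell}\tt{b}_{\ell+1}\cdots\tt{b}_m\gamma^{k-1}$ the copies of $\gamma$ start at index $d+1$, while in the necklace of $\phi(\alpha)^R$, which equals $0^{j+\ell}(\gamma^R)^{k-1}\tt{b}_m\cdots\tt{b}_{\ell+1}$, the copies of $\gamma^R$ start at index $j+\ell+1$; since $\beta$ ends in $1$ we have $m>\ell$, so $j+\ell<d$ and the two block structures are offset by $m-\ell>0$ positions. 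A periodwise comparison therefore does not go through as described, and this really is, as you say, the crux. The paper avoids the mess entirely: because its image is $0^{d-1}1\gamma^{k-1}$, after stripping the common prefix $0^{d-1}$ the comparison reduces to $1\gamma < \gamma^R 1$, which follows in one line from the fact that the aperiodic necklace $\gamma=0^{j-1}1\cdots$ has no nonempty border and hence $\gamma^R$ cannot begin with $10^{j-1}$. You would need to supply a comparably solid argument for your offset alignment before this proof could be accepted.
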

%
%
\begin{proof}
It suffices to show the existence of a 1-1 mapping $f$ from the periodic strings in $\A(n)$ to the aperiodic strings in $\A(n)$. Let $\alpha$ be periodic and in $\A(n)$. Then $\alpha = \beta^i$ for some aperiodic asymmetric bracelet $\beta$ where $i>1$. Let $p=|\beta|$. Define $f(\alpha) = 0^{p-1}1 \beta^{i-1}$. Clearly $f$ is 1-1; if $f(\alpha)=f(\alpha')$ for some periodic $\alpha'\in \A(n)$, then $f(\alpha)$ and $f(\alpha')$ share the prefix $0^{p-1}1$, which implies $\alpha=\alpha'$.
Now we prove that $f(\alpha)$ is aperiodic and is in $\A(n)$. We must have $\beta > 0^{p-1}1$, for otherwise $\alpha$ would be a symmetric bracelet. Thus, $f(\alpha)$ is an aperiodic necklace, but is not necessarily in $\A(n)$. Write $\beta = 0^k1\gamma$ where $k\geq 1$ and $\gamma$ is a non-empty string. Since $\beta$ is an aperiodic bracelet, it is an aperiodic necklace. Therefore, any nonempty proper prefix of $\beta$ cannot also be a suffix of $\beta$~\cite[Proposition 5.1.2]{Lothaire:1997}, and $\beta$ has no substring $0^{k+1}$. So $\beta^R$ must begin with a string larger than $10^k1$, and thus $1\beta = 10^k1\gamma < \beta^R1$. It follows that $f(\alpha)\in \A(n)$. 
\end{proof}

\noindent
From equation~\eqref{eqn:L}, we immediately have the following corollary.
\begin{corollary}  \label{cor:count}
    $n |\A(n)| \leq 2L_n$.
\end{corollary}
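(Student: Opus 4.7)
The plan is to unpack the definition of $L_n$ from equation~\eqref{eqn:L}, split the sum according to whether each bracelet is aperiodic or periodic, and then invoke Lemma~\ref{lem:aperiodic} to control the periodic contribution.

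First, I would partition $\A(n)$ into the set of aperiodic asymmetric bracelets (call them $\A^{\mathrm{ap}}(n)$) and the set of periodic ones (call them $\A^{\mathrm{per}}(n)$). For any aperiodic $\alpha \in \A^{\mathrm{ap}}(n)$, we have $\ap(\alpha)=\alpha$ and so $|\ap(\alpha)|=n$. For periodic $\alpha$, we only need the trivial bound $|\ap(\alpha)|\geq 0$ (a cleaner bound $|\ap(\alpha)|\geq 1$ also holds, but is not needed). Therefore
\[
L_n \;=\; \sum_{\alpha \in \A(n)} |\ap(\alpha)| \;\geq\; \sum_{\alpha \in \A^{\mathrm{ap}}(n)} n \;=\; n\,|\A^{\mathrm{ap}}(n)|.
\]

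Next, I would apply Lemma~\ref{lem:aperiodic}, which states $|\A^{\mathrm{per}}(n)| \leq |\A^{\mathrm{ap}}(n)|$. This immediately gives $|\A^{\mathrm{ap}}(n)| \geq \tfrac{1}{2}|\A(n)|$, since the aperiodic bracelets make up at least half of $\A(n)$. Combining the two inequalities yields
\[
L_n \;\geq\; n\,|\A^{\mathrm{ap}}(n)| \;\geq\; \frac{n\,|\A(n)|}{2},
\]
which rearranges to $n\,|\A(n)| \leq 2L_n$, as desired.

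There is essentially no obstacle here: the corollary is a one-line consequence of Lemma~\ref{lem:aperiodic} combined with the observation that each aperiodic bracelet contributes its full length $n$ to the sum defining $L_n$. The only thing to be careful about is to cite the correct half of equation~\eqref{eqn:L} (the combinatorial sum expression, not the closed-form M\"obius expression), and to note that the trivial lower bound of $0$ on $|\ap(\alpha)|$ for periodic $\alpha$ suffices — no sharper estimate on the aperiodic prefix length of periodic bracelets is needed.
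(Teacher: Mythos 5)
Your argument is correct and matches the paper's intended one-line derivation: equation~\eqref{eqn:L} gives $L_n = \sum_{\alpha\in\A(n)}|\ap(\alpha)| \geq n\,|\A^{\mathrm{ap}}(n)|$, and Lemma~\ref{lem:aperiodic} gives $|\A^{\mathrm{ap}}(n)| \geq |\A(n)|/2$, which together yield the bound. The paper leaves this unpacking implicit; you have filled it in accurately.
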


\section{Computing the children of a node in $\cycletree_n$} \label{sec:children}


In this section, we present an optimized way to determine the children of a node $\beta =  \tt{b}_1\tt{b}_2\cdots \tt{b}_n$ in $\cycletree_n$. We use this optimization in Section~\ref{sec:concat} to generate orientable sequences in $O(1)$-amortized time per bit. 
\newline\indent
Let $s$ and $t$ be integers such that 
$\beta$ has prefix $0^s1$ and suffix $10^{t}1$.  Let $s'$ denote the largest integer such that $0^{s'}$ is a substring of $\tt{b}_{s+1}\cdots \tt{b}_n$. Let $\beta_k$ denote $\tt{b}_1\cdots \tt{b}_{k-1}\overline{\tt{b}}_k\tt{b}_{k+1}\cdots \tt{b}_n$; it differs from $\beta$ only at index $k$. Recall that $\tilde{\beta_k}$ is the necklace in $[\beta_k]$. Let \Call{MAX}{$x,y$} denote the maximum of the integers $x$ and $y$.  Our goal is to determine the indices $k$ such that $\tilde{\beta_k}$ is in $\A(n)$ and $\parent(\tilde{\beta_k}) = \beta$. 
Consider the three cases of the parent rule $\parent$:
\begin{itemize}
    \item Suppose $\parent(\tilde{\beta_k}) = \firstone(\tilde{\beta_k}) = \beta$.  Since $\beta$ and $\beta_k$ differ only at index $k$, it must be that $k$ is the index of the first $1$ in $\beta_k$.  Thus $\tilde{\beta_k} = \beta_k$ has prefix $0^{k-1}1$ and $k \leq s$.  Since $\beta_k$ is a necklace, $k >$ \Call{MAX}{$\lfloor{s/2}\rfloor, s'$}. 
    Suppose \Call{MAX}{$\lfloor{s/2}\rfloor, s'$} $+ 1 < k < s$ and $\beta_k$ is not in $\A(n)$. Note that  $\beta_k$ is a necklace since it has a unique substring $0^{k-1}$ as a prefix.  Thus, it must be that $1\tt{b}_{k+1}\cdots \tt{b}_n \geq (1\tt{b}_{k+1}\cdots \tt{b}_n)^R$.  Since $k+1 \leq s$, this  implies that $1\tt{b}_{k+2}\cdots \tt{b}_n \geq (1\tt{b}_{k+2}\cdots \tt{b}_n)^R$ and hence $\beta_{k+1}$ is  also not in $\A(n)$.  Thus, starting from index \mbox{$k=$ \Call{MAX}{$\lfloor{s/2}\rfloor, s'$} $+ 1$} (which may or may not lead to a child), and incrementing up to $s$, we can stop testing once an index $k >$ \Call{MAX}{$\lfloor{s/2}\rfloor, s'$} $+ 1$ does not lead to a child.

   \item Suppose $\parent(\tilde{\beta_k}) = \lastone(\tilde{\beta_k}) = \beta$.  It follows from Remark~\ref{rem:lastone} that $\tilde{\beta_k} = \tt{b}_{k+1}\cdots \tt{b}_n0^{k-1}1$.  Since $\tilde{\beta_k}$ is a necklace, it must be that $k \leq \lceil{s/2}\rceil$. 
    If $k$ is the smallest index in $1, 2, \ldots , \lceil s/2 \rceil -1$ such that $\tt{b}_{k+1}\cdots \tt{b}_n0^{k-1}1$ is not in $\A(n)$, then by applying the definition of an asymmetric bracelet, it is straightforward to verify that $\tt{b}_{k+2}\cdots \tt{b}_n0^{k}1$ is also not in $\A(n)$.  Thus, starting from index $k=1$ and incrementing,  we can stop testing indices $k$ for this case once $\tt{b}_{k+1}\cdots \tt{b}_n0^{k-1}1$ is not in $\A(n)$.

    \item Suppose $\parent(\tilde{\beta_k}) = \lastzero(\tilde{\beta_k}) = \beta$. Then it must be that $k=n-1$ or $k=n-2$ from Lemma~\ref{lem:parents}. 
\end{itemize}
Based on this analysis, the function \Call{FindChildren}{$\beta$} defined in Algorithm~\ref{alg:children} will return
$\tt{c}_1\tt{c}_2\cdots \tt{c}_n$ such that $\tt{c}_k = 1$ if and only if $\tilde{\beta_k}$ is a child of $\beta$.  

\begin{algorithm}[h] 
\caption{ Determine the children of a node  $\beta = \tt{b}_1\tt{b}_2\cdots \tt{b}_n$ in $\cycletree_n$, returning $\tt{c}_1\tt{c}_2\cdots \tt{c}_n$ such that $\tt{c}_k = 1$ if and only if $\tilde{\beta_k}$ is a child of $\beta$.}
\label{alg:children}  

\small
\begin{algorithmic}[1]

\Function{FindChildren}{$\beta$}

    \State $\tt{c}_1\tt{c}_2\cdots \tt{c}_n \gets 0^n$   
    \State $s \gets$  integer such that $0^s 1$ is a prefix of $\beta$
    \State $s' \gets $  largest integer such that $0^{s'}$ is a substring of $\tt{b}_{s+1} \cdots \tt{b}_n$

    \Statex    
    
    \State \blue{$\triangleright$ FIRST 1} 
    \For{$k$ {\bf from} $\Call{Max}{\lfloor s/2 \rfloor, s'}+1$ {\bf to} $s$}  
        \If{$0^{k-1}1\tt{b}_{k+1}\cdots \tt{b}_n \in \A(n)$} ~ $\tt{c}_k \gets 1$ 
        \ElsIf{$k > \Call{Max}{\lfloor s/2 \rfloor, s'}+1$}  \ {\bf break}  
        \EndIf
    \EndFor
    \Statex

    \State \blue{$\triangleright$ LAST 1} 
    \For{$k$ {\bf from} $1$ {\bf to} $\lceil s/2 \rceil$  }  
        \If{$\tt{b}_{k+1}\cdots \tt{b}_n0^{k-1}1  \in \A(n) $}   
            \If{ $\beta = \parent(\tt{b}_{k+1}\cdots \tt{b}_n0^{k-1}1)$}   \    $\tt{c}_k \gets 1$ 
            \EndIf
        \Else~ {\bf break}
        \EndIf
    \EndFor
    \Statex 

    \State \blue{$\triangleright$ LAST 0} 
    \If{$\tt{b}_{n-1} = 1$ {\bf and} $\tt{b}_1\cdots \tt{b}_{n-2}01  \in \A(n)$ {\bf and}  $\beta = \parent(\tt{b}_1\cdots \tt{b}_{n-2}01) $}   \  $\tt{c}_{n-1} \gets 1$
         \EndIf 

     \If{$\tt{b}_{n-1} = \tt{b}_{n-2} = 1$
    {\bf and } $\tt{b}_1\cdots \tt{b}_{n-3}011  \in \A(n)$ {\bf and } $\beta = \parent(\tt{b}_1\cdots \tt{b}_{n-3}011)$} \  $\tt{c}_{n-2} \gets 1$
        \EndIf

\Statex
 \State \Return  $\tt{c}_1\cdots \tt{c}_n$

\EndFunction
\end{algorithmic}
\end{algorithm}


\begin{lemma} \label{lem:analyzeChildren}
The time required by calls to \Call{FindChildren}{$\beta$} summed over all $\beta \in \A(n)$ is $O(L_n)$.
\end{lemma}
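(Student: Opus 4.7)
The plan is to analyze the three segments of \Call{FindChildren}{$\beta$} (FIRST 1, LAST 1, LAST 0) separately, bound the total number of loop iterations across all $\beta \in \A(n)$ by $O(|\A(n)|)$, and observe that each iteration does $O(n)$ work by Theorem~\ref{theorem:braceletTest}. Corollary~\ref{cor:count} would then convert the resulting $O(n|\A(n)|)$ bound into the desired $O(L_n)$.

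For the FIRST 1 segment, I would exploit the early-break rule: after the initial probe at $k = \Call{Max}{\lfloor s/2 \rfloor, s'} + 1$, the loop terminates on the first failure, so the number of iterations per $\beta$ is at most (number of successful tests) $+\, 2$. Each successful test yields a distinct FIRST 1 child of $\beta$ in $\cycletree_n$, and since the tree has only $|\A(n)| - 1$ edges in total, summing over all $\beta$ gives $O(|\A(n)|)$ iterations. The LAST 0 segment does $O(1)$ iterations per $\beta$, trivially contributing $O(|\A(n)|)$ iterations overall.

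The LAST 1 segment is the main obstacle. Its break rule limits failing iterations to at most one per $\beta$, but a \emph{passing} iteration (one where $\alpha_k := \tt{b}_{k+1}\cdots \tt{b}_n 0^{k-1} 1 \in \A(n)$) need not contribute a child, since $\parent(\alpha_k)$ could equal $\firstone(\alpha_k)$ rather than $\lastone(\alpha_k) = \beta$. My key claim is that the map $(\beta, k) \mapsto \alpha_k$, restricted to passing iterations, is injective into $\A(n)$. Suppose $\alpha_k = \alpha'_{k'}$ for two passing iterations $(\beta, k)$ and $(\beta', k')$ with $k \le k'$. If $k = k'$, then the last $n-k$ bits of $\beta$ and $\beta'$ agree, and since $k \le \lceil s/2 \rceil \le s$ forces the first $k$ bits of $\beta$ to all be $0$ (and similarly for $\beta'$), we conclude $\beta = \beta'$. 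If $k < k'$, equating positions $n-k'+1, \ldots, n-k$ of $\alpha$ forces the last $k' - k \ge 1$ bits of $\beta$ to be $0$; but every $\beta \in \A(n)$ is a non-constant necklace starting with $0$, hence must end with $1$ (otherwise rotating its trailing zeros to the front produces a lex-smaller rotation), a contradiction. Injectivity gives at most $|\A(n)|$ passing iterations in total, plus at most one failing iteration per $\beta$.

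Combining the three segments, the total iteration count summed over $\beta \in \A(n)$ is $O(|\A(n)|)$; multiplying by the $O(n)$ cost per iteration and invoking Corollary~\ref{cor:count} yields $O(n |\A(n)|) = O(L_n)$. The hardest step is the injectivity argument for LAST 1, since the naive per-$\beta$ bound $T_\beta \le \lceil s/2 \rceil$ would only give $O(n L_n)$.
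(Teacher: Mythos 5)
Your argument is correct and follows the same amortized-charging strategy as the paper's proof, decomposing the work by segment and charging the dominant cost of the LAST~1 loop to the probed nodes $\gamma = \tt{b}_{k+1}\cdots\tt{b}_n 0^{k-1}1 \in \A(n)$ before invoking Corollary~\ref{cor:count}. The only difference is in how the key injectivity of $(\beta,k)\mapsto\gamma$ is established: you prove it directly by a positional case analysis (using that every asymmetric bracelet ends in $1$), whereas the paper reaches the same conclusion more briefly by observing that $\lastone(\gamma)=\beta$, so each probed node uniquely recovers the $\beta$ charging it.
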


\begin{proof}
Each operation in \Call{FindChildren}{} requires at most $O(n)$ work, including membership testing to $\A(n)$, and the parent function.  Consider each of the two {\bf for} loops. In the first {\bf for} loop on line 6, there are at most two membership tests to $\A(n)$ that do not detect children; for all other tests the $O(n)$ work can be assigned to the corresponding child node in $\A(n)$.  For the second {\bf for} loop starting at line 10, only one membership test to $\A(n)$ will fail; however, there may be multiple parent tests on line 12 that do not lead to a child.  In these cases, $\lastone(\tt{b}_{j+1}\cdots \tt{b}_n0^{k-1}1) = \beta$, but $\parent(\tt{b}_{j+1}\cdots \tt{b}_n0^{k-1}1) = \firstone(\tt{b}_{j+1}\cdots \tt{b}_n0^{k-1}1)$. The $O(n)$ work from each of these parent tests can be assigned uniquely to the node corresponding to the asymmetric bracelet being tested $\tt{b}_{j+1}\cdots \tt{b}_n0^{k-1}1$; each node in $\A(n)$ can receive at most one such assignment because $\lastone(\tt{b}_{j+1}\cdots \tt{b}_n0^{k-1}1) = \beta$.
Since a linear amount of work can be assigned to each $\beta \in \A(n)$, the time required by calls to \Call{FindChildren}{$\beta$} summed over all $\beta \in \A(n)$ is $O(n |\A(n)|)$. 
Thus, by Corollary~\ref{cor:count} we have our result.
\end{proof}

\section{Concatenation trees and RCL order} \label{sec:concat}
In this section, we present an algorithm based on the recent theory of concatenation trees~\cite{concat}.  It generates the orientable sequences constructed in the previous section in $O(1)$-amortized time per bit using $O(n^2)$ space.  



A \defo{bifurcated ordered tree}  is  a rooted tree where each node contains two \emph{ordered} lists of children, the left-children and right-children, respectively.
 The \defo{concatenation tree $\concattree_n$} is a bifurcated ordered tree derived from $\cycletree_n$ that keeps the same parent-child structure, while the label of each node may change from $\alpha$ to another string in its necklace class $[\alpha]$, and the children are partitioned into to ordered left-children and right-children.
 The label of each non-root node becomes the string in its necklace class that differs from the label of its parent in exactly one position that we call the \defo{change index}. 
The label and change index are unique since periodic nodes appear only at the leaves (see Lemma~\ref{lem:periodic}).
 The root is labeled $\treeroot{}$ (the same label as the root of $\cycletree_n$) and assigned change index $n$.
   If a node has change index $c$, its left-children are the children with change index less than $c$, and the right-children are the children with change index greater than $c$. In both cases, the children are ordered from smallest to largest based on their change index.   As an example, the concatenation tree $\concattree_9$ in Figure~\ref{fig:bot9} is obtained from the cycle-joining tree $\cycletree_9$ illustrated in Figure~\ref{fig:cycle9}. 
\begin{figure}[ht]
    \centering
    \resizebox{5.5in}{!}{\includegraphics{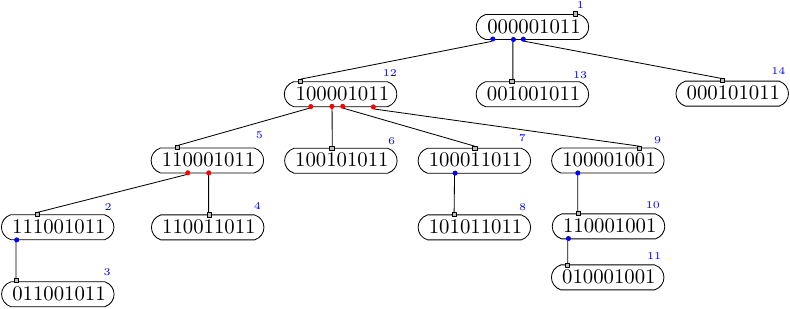}}
    \caption{The concatenation tree $\concattree_9$ derived from the cycle-joining tree $\cycletree_9$ shown in Figure~\ref{fig:cycle9}.  The small grey box on the top edge of each node indicates the change index;  
    the left-children descend from \blue{blue dots $\bullet$} and the right-children descend from \red{red dots $\bullet$ }. The small  numbers above each node indicate the order the nodes are visited in an RCL traversal.}
    \label{fig:bot9}
\end{figure}

A \defo{right-current-left (RCL) traversal} of $\concattree_n$ starts at the root and recursively visits the right-children from first to last, followed by the current node, followed by recursively visiting the left-children from first to last.   Let \defo{$\RCL(\concattree_n)$} denote the sequence generated by traversing $\concattree_n$ in RCL order, outputting the aperiodic prefix $\ap(\alpha)$ as each node $\alpha$  is visited.
The order the nodes of $\concattree_9$ are visited by an RCL traversal is illustrated in Figure~\ref{fig:bot9}; the corresponding sequence $\RCL(\concattree_9)$ is an $\OS(9)$ of length $L_9=126$:

\begin{center}
\begin{tabular}{l}
$000001011~
111001011~
011001011~
110011011~
110001011~
100101011~
100011011
$ \\
$101011011~ 
100001001~
110001001~
010001001~
100001011~
001001011~ 
000101011.$ 
\end{tabular}
\end{center}

\noindent
In this example, each node $\alpha$ is aperiodic and hence $\ap(\alpha) = \alpha$, but this is not always the case.





The following theorem follows directly from the main result in~\cite{concat}, recalling the successor-rule $g$ defined in Section~\ref{sec:succ}.

\begin{theorem}
For $n \geq 6$, the sequence $\RCL(\concattree_n)$ is an  $\OS(n)$  of length $L_n$ that has successor-rule $g$.
\end{theorem}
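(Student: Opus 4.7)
The plan is to apply the main theorem of~\cite{concat} as a black box; the theorem converts a cycle-joining tree satisfying certain structural hypotheses into a concatenation tree whose RCL traversal produces exactly the universal cycle generated by the corresponding conjugate-pair successor rule. So the task reduces to verifying that $\cycletree_n$ satisfies those hypotheses and that the rule extracted by the framework coincides with $g$.

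First I would verify the structural hypotheses. The node set $\A(n)$ of $\cycletree_n$ parameterizes the rotation-closed set $\Set(n) = \bigcup_{\alpha \in \A(n)} [\alpha]$, and by Theorem~\ref{thm:cycle} the edges of $\cycletree_n$ are conjugate-pair joins induced by the parent rule in~(\ref{eq:par}); this is exactly the cycle-joining tree input required by~\cite{concat}. The non-trivial hypothesis of the framework — that every periodic node be a leaf — is supplied by Lemma~\ref{lem:periodic}; this is what permits the simplified BOT construction described just before the theorem, and it guarantees that each RCL visit of a node $\alpha$ outputs $\ap(\alpha)$ (rather than requiring a more intricate traversal of periodic subtrees).

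Next I would match the successor rule. The framework of~\cite{concat} extracts a successor rule whose output flips $\tt{a}_1$ precisely when the current rotation of $\alpha$ is the ``flip-bit representative'' of a conjugate pair used by the parent rule. The three branches of $g$ in Section~\ref{sec:succ} mirror, in order, the three branches of the parent rule in~(\ref{eq:par}): $\beta_1$ corresponds to First-$1$ and flags the rotation of $\alpha$ where the parent edge is $\firstone$; $\beta_2$ corresponds to Last-$1$ and flags the rotation handled by $\lastone$ when First-$1$ does not apply; and $\beta_3$ corresponds to Last-$0$ and flags the final case. Thus $g$ is exactly the successor rule extracted by the framework, so $\RCL(\concattree_n)$ is the universal cycle produced by $g$.

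Finally, the length and orientability follow from the setup: $|\Set(n)| = L_n$ by equation~(\ref{eqn:L}), and since $\A(n)$ contains no asymmetric bracelet together with the necklace of its reversal, no length-$n$ substring of this universal cycle appears in both directions; hence the sequence is an $\OS(n)$ of length $L_n$. The only obstacle is confirming that the hypotheses of~\cite{concat} hold, and all the non-trivial content for that verification — namely that periodic nodes are leaves and that the cycle-joining tree is well-defined — was already established in Sections~\ref{sec:parent} and~\ref{sec:periodic}, so no further combinatorial work is required.
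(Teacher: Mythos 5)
Your proposal is correct and takes essentially the same approach as the paper: the paper asserts that the theorem ``follows directly from the main result in~\cite{concat},'' and you have simply spelled out the hypothesis-checking (Theorem~\ref{thm:cycle} for the cycle-joining tree, Lemma~\ref{lem:periodic} for periodic nodes being leaves, and the branch-by-branch match between $g$ and the parent rule) that the paper leaves implicit.
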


To avoid the exponential space required to store a concatenation tree, we demonstrate how to efficiently determine the children of a given node $\alpha=\tt{a}_1\tt{a}_2\cdots \tt{a}_n$ in $\concattree_n$.  In particular, given an index $k$, we want to determine whether or not $\alpha_k = \tt{a}_1\cdots \tt{a}_{k-1}\overline{\tt{a}}_k\tt{a}_{k+1}\cdots \tt{a}_n$
is a child of $\alpha$.  
From Lemma~\ref{lem:periodic}, if $\alpha$ is periodic, it has no children.  Otherwise, $\tilde{\alpha} = \tt{a}_s\cdots \tt{a}_n\tt{a}_1\cdots \tt{a}_{s-1}$ is a node in $\cycletree_n$  for some $1\leq s \leq n$; it is the necklace of $[\alpha]$. Thus, if $\tt{c}_1\tt{c}_2\cdots \tt{c}_n =$ \Call{FindChildren}{$\tilde{\alpha}$} (see Section~\ref{sec:children}),  $\tt{d}_1\tt{d}_2\cdots \tt{d}_n = \tt{c}_s\cdots \tt{c}_n\tt{c}_1\cdots \tt{c}_{s-1}$ is a sequence such that 
$\tt{d}_k = 1$ if and only if $\alpha_k$ is a child of $\alpha$ in $\concattree_n$.
The procedure \Call{FastRCL}{$\alpha,c$}, shown in Algorithm~\ref{algo:rcl}, applies this observation to generate $\RCL(\concattree_n)$ when initialized with $\alpha = \treeroot{}$ and $c=n$.

\begin{algorithm}[h]  
\caption{
RCL traversal of $\concattree_n$ with the initial call of \Call{FastRCL}{$\treeroot{}, n$}.  The current node $\alpha = \tt{a}_1\tt{a}_2\cdots \tt{a}_n$ has change index $c$.}
\label{algo:rcl}  

\small
\begin{algorithmic}[1]
                  
\Procedure{FastRCL}{$\alpha$, $c$}

       \State $p \gets $ period of $\alpha$ 
        \If{$p< n$}   \  \Call{Print}{$\tt{a}_1\cdots \tt{a}_p$}   \ \ \    \blue{$\triangleright$ Visit periodic node (it has no children)} 
        \Else 
               \State $s \gets $ unique index such that $\tt{a}_s\cdots \tt{a}_n\tt{a}_1\cdots \tt{a}_{s-1}$ is a necklace
            \State $\tt{c}_1\tt{c}_2\cdots \tt{c}_n \gets $\Call{FindChildren}{$\tt{a}_s\cdots \tt{a}_n\tt{a}_1\cdots \tt{a}_{s-1}$}
\ \ \  \blue{$\triangleright$ Determine the children indices relative to $\cycletree_n$} 
    \State $\tt{d}_1\tt{d}_2\cdots \tt{d}_n \gets \tt{c}_s\cdots \tt{c}_n\tt{c}_1\cdots \tt{c}_{s-1}$  \ \ \ \blue{$\triangleright$ Make child indices relative to $\alpha$ in $\concattree_n$}

    \Statex

    \State \blue{$\triangleright$ RCL traversal}
    \For{$i\gets c+1$ {\bf to} $n$} 
         \If{$\tt{d}_i= 1$}  \  \Call{FastRCL}{$\tt{a}_1\cdots \tt{a}_{i-1}\overline{\tt{a}}_i\tt{a}_{i+1}\cdots \tt{a}_n$, $i$}   \ \ \  \blue{$\triangleright$ Visit Right-children}     \EndIf 
    \EndFor

    \State \Call{Print}{$\tt{a}_1\cdots \tt{a}_n$}      \ \ \ \blue{$\triangleright$ Visit Current node} 
    \For{$i\gets 1$ {\bf to} $c-1$}    
         \If{$\tt{d}_i = 1$}  \  \Call{FastRCL}{$\tt{a}_1\cdots \tt{a}_{i-1}\overline{\tt{a}}_i\tt{a}_{i+1}\cdots \tt{a}_n$, $i$}  \ \ \   \blue{$\triangleright$ Visit Left-children}    \EndIf 
    \EndFor

    \EndIf

\EndProcedure

\end{algorithmic}
\end{algorithm}

\noindent
Since $\concattree_6$ consists only of the root $r_6 = 001011$, the output of \Call{FastRCL}{$r_6$, $6$} is simply $001011$.

\begin{theorem} For $n \geq 6$, 
    \Call{FastRCL}{$\treeroot{}$, $n$} generates $\RCL(\concattree_n)$ in $O(1)$-amortized time per bit using $O(n^2)$ space.
\end{theorem}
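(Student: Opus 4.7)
Correctness of \Call{FastRCL}{} reduces to three observations already laid down in the paper: (i) Lemma~\ref{lem:periodic} guarantees that a periodic node has no children, so the branch on line 3 safely prints $\ap(\alpha) = \tt{a}_1\cdots\tt{a}_p$ and returns; (ii) for an aperiodic node $\alpha$, rotating $\alpha$ to its necklace $\tilde\alpha = \tt{a}_s\cdots\tt{a}_n\tt{a}_1\cdots\tt{a}_{s-1}$ and applying \Call{FindChildren}{$\tilde\alpha$} yields the children indices relative to $\cycletree_n$; undoing the rotation by the same amount (to produce $\tt{d}_1\cdots\tt{d}_n$) converts them into child indices relative to $\alpha$ in $\concattree_n$; (iii) separating the indices in the $\{i : \tt{d}_i = 1\}$ by the change index $c$ into $i>c$ (right-children) and $i<c$ (left-children), with each block processed in increasing order, is exactly the RCL order on $\concattree_n$. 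Together with the convention that the root has change index $n$ (so every non-root child index is a left-child of the root), these three points show that the procedure visits every node of $\concattree_n$ once in RCL order and prints its aperiodic prefix, hence outputs $\RCL(\concattree_n)$.

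\textbf{Space.} The only persistent state is the recursion stack. By Theorem~\ref{thm:cycle}, $\concattree_n$ has height less than $2(n-4)$, so the stack has depth $O(n)$. Each activation record stores $\alpha$ and the two bitstrings $\tt{c}_1\cdots \tt{c}_n$ and $\tt{d}_1\cdots \tt{d}_n$ produced on lines 6--7, each of length $n$; all auxiliary quantities computed inside a call (the period $p$, the rotation index $s$, the necklace/membership tests inside \Call{FindChildren}{}) also fit in $O(n)$ space and may be reused. The total space is therefore $O(n^2)$.

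\textbf{Time.} The total number of bits written is $\sum_{\alpha\in\A(n)}|\ap(\alpha)| = L_n$ by~\eqref{eqn:L}. It suffices to show that the total work performed by the whole traversal is $O(L_n)$. The work in a single activation, \emph{excluding} the call to \Call{FindChildren}{}, is $O(n)$: computing $p$, locating $s$, rotating the child vector, and the two loops on lines~8--9 and 11--12 each cost $O(n)$ (the loops only examine indices, not additional work per index beyond comparisons and a constant-size branch). Since $\concattree_n$ has exactly $|\A(n)|$ nodes, this contribution is $O(n|\A(n)|)$, which by Corollary~\ref{cor:count} is $O(L_n)$. By Lemma~\ref{lem:analyzeChildren}, the work of all \Call{FindChildren}{} calls summed across every node is also $O(L_n)$. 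Summing both contributions gives total work $O(L_n)$, for an amortized cost of $O(1)$ per output bit.

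\textbf{Main obstacle.} The conceptually routine but delicate step is verifying that the rotation $\tt{c}_s\cdots\tt{c}_n\tt{c}_1\cdots\tt{c}_{s-1}$ correctly transports child indices from $\cycletree_n$ (where nodes are necklaces) to $\concattree_n$ (where the label is a particular rotation determined by the change-index discipline), and that processing the resulting $\tt{d}$-vector as ``right then current then left'' with respect to the change index $c$ exactly matches the RCL order induced on $\concattree_n$ by the conversion described in Section~\ref{sec:concat}. Once this bookkeeping is pinned down, correctness is immediate and the amortized analysis is a direct combination of Lemma~\ref{lem:analyzeChildren} and Corollary~\ref{cor:count}.
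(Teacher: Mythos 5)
Your proposal is correct and follows essentially the same line as the paper's own proof: both use Theorem~\ref{thm:cycle} for the $O(n)$ recursion depth (hence $O(n^2)$ space), Lemma~\ref{lem:analyzeChildren} to bound the total \textsc{FindChildren} work by $O(L_n)$, and Corollary~\ref{cor:count} to bound the remaining per-node $O(n)$ work over $|\A(n)|$ nodes by $O(L_n)$. The extra detail you give on correctness (the rotation bookkeeping between $\cycletree_n$ and $\concattree_n$ and the RCL split by change index) is implicit in the paper's preceding exposition rather than its proof, but matches it.
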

\begin{proof} 
Each recursive call requires $O(n)$ space and from Theorem~\ref{thm:cycle}, the tree has height less than $2(n{-}4)$.  Thus, the space required by the algorithm is $O(n^2)$. 
By Lemma~\ref{lem:analyzeChildren}, the work required by all calls to \Call{FindChildren}{} is $O(L_n)$.  Ignoring these calls, there is a $O(n)$ work done at each
recursive call to \Call{RCL}{$\alpha,c$}; determining the necklace and period of a string can be computed in $O(n)$ time~\cite{Booth}.  Since there are $|\A(n)|$ nodes in $\concattree_n$,      the total work is $O(L_n)$ by applying Corollary~\ref{cor:count}.   Thus, the algorithm \Call{FastRCL}{$\treeroot{}$, $n$}, which outputs $L_n$ bits, runs in $O(1)$-amortized time per bit.

\end{proof}

\section{Extending orientable sequences}  \label{sec:extend}   

The values from the column labeled $L^*_n$ in Table~\ref{table:bounds} were found by extending an $\OS(n)$ of length $L_n$ constructed in the previous section.  Given an $\OS(n)$,  $\tt{o}_1\cdots \tt{o}_m$, the following approaches were applied to find longer $\OS(n)$s for $n\leq 20$:

\begin{enumerate}
    \item For each index $i$, apply a standard backtracking search to see whether $\tt{o}_{i}\cdots \tt{o}_m\tt{o}_1\cdots \tt{o}_{i-1}$ can be extended to a longer $\OS(n)$. We
        followed several heuristics: (a) find a maximal length extension for a given $i$, 
        and then attempt to extend starting from index $i+1$;  (b) find a maximal length extension
        over all $i$, then repeat;  (c) find the ``first'' possible extension for a given $i$, and then repeat for the next index $i+1$.  In each case, we repeat until no extension can be found for any starting index.   This approach was fairly successful for even $n$, but found shorter extensions for $n$ odd.  Steps (a) and (b) were only applied to $n$ up to $14$ before the depth of search became infeasible.
    \item  Refine the search in the previous step so the resulting $\OS(n)$ of length $m'$ has an odd number of $1$s and at most one substring $0^{n-4}$. 
    Then we can apply the recursive construction by Mitchell and Wild~\cite{MW} to generate an $\OS(n+1)$ with length $2m'$ or $2m'+1$. Then, starting from the sequences generated by recursion, we again apply the exhaustive search to find minor extensions (the depth of recursion is significantly reduced).   This approach found significantly longer extensions to obtain $\OS(n+1)$s when $n+1$ is odd. 
        
\end{enumerate}

\section{Acyclic orientable sequences} \label{sec:aperiodic}

Let $\mathcal{AOS}(n)$ denote an acyclic orientable sequence of order $n$.
If $\tt{o}_1\tt{o}_2\cdots \tt{o}_m$ is an $\OS(n)$, then it follows from our definitions that $\tt{o}_1\cdots \tt{o}_m\tt{o}_1\cdots \tt{o}_{n-1}$ is an $\mathcal{AOS}(n)$. 
As noted in~\cite{BM}, none of the $2^{\lfloor (n+1)/2 \rfloor}$ binary palindromes of length $n$ can appear as a substring in any $\mathcal{AOS}(n)$.  Thus,
a straightforward upper bound on the length of any $\mathcal{AOS}(n)$ is  
\[ \hat{U}_n = \frac{1}{2}(2^{n} - 2^{\lfloor (n+1)/2 \rfloor}) + (n-1)~~~~\cite{BM}.\]

By applying our cycle-joining based construction, we can efficiently construct an $\mathcal{AOS}(n)$ of length $L_n + (n{-}1)$. 
Previously, the only construction of $\mathcal{AOS}(n)$s recursively applied Lempel's lift~\cite{MW}, requiring exponential space.   Starting with an $\OS(n)$ found by extending a constructed sequence of length $L_n$ (see Section~\ref{sec:extend}), we  apply a computer search to extend the $\OS(n)$ to an $\mathcal{AOS}(n)$ by considering each $\tt{o}_{i}\cdots \tt{o}_m\tt{o}_1\cdots \tt{o}_{i-1}$ and attempting to extend in each direction.  This approach produced the  longest known $\mathcal{AOS}(n)$s for $n=12,13,14$, improving on the lengths discovered by Burns and Mitchell~\cite{BM} from applying a computer search.
The original data from~\cite{BM} was for $n \leq 16$; we extend the list of longest known $\mathcal{AOS}(n)$s up to $n=20$.\footnote{The resulting $\mathcal{AOS}(n)$s generated up to $n=20$ are available for download at \url{http://debruijnsequence.org/db/orientable}.}  These results are summarized in Table~\ref{tab:aperiodic}.

\noindent

\begin{table}[ht]
\centering
 \begin{tabular} {r ||r  |   r  ||  r  |   r    ||   r  } 

    & \multicolumn{2}{c||}{Constructions}  &   \multicolumn{2}{c||}{Computer Search} \\ 
 $n$ &  Recursion [MW21]  &   $L_n + (n{-}1)$  &  Extended from $\OS(n)$  &   [BM93]  &  $\hat{U}_n$\\  \hline
6 & \bblue{26}     &	11			&  \bblue{26} 	   & \bblue{26}	&  33\\
7 & \bblue{48}    & 20 	 		&  \bblue{48}        & \bblue{48}  	& 62 	 \\
8 & 92            & 55 				& {\bf 108}              	  & {\bf 108}            	& 127	 \\
9 & 178           & 134 			& 193               	  & {\bf 210}           	& 248	 \\
10 & 350          & 309 			& 435               	  & {\bf 440}          	& 505	 \\
11 & 692 	      & 692  			& 868               	 &  {\bf 872}          	& 1002 \\
12 & 1376         & 1541  			& {\bf 1874}        & 1860         	 & 2027  \\
13 & 2742         & 3288  			& {\bf 3732}       & 3710         	& 4044 \\
14  & 5474        & 6929  			& {\bf 7724}      & 7400            & 8141  \\
15 &10936         & 14534 			& 15432           & 15467       	& 16270  \\
16 & 21860        & 29823 			& 31560           & 31766       	& 32655	 \\
17 & 43706        & 61216 			& {\bf 63219}           & --  ~~     	& 65296	 \\
18 &  87398       & 124461 		    & {\bf 128680}          & --  ~~     	& 130833	 \\
19 &   174780     & 252842 		    & {\bf 257340}          & --  ~~     	& 	261650 \\
20 &  349544      & 509239 		    & {\bf 519212}        	  	& --   ~~    	& 	523795 \\
\end{tabular}


    \caption{The lengths of the longest known $\mathcal{AOS}(n)$s found via construction and computer search for $n=6,7,\ldots, 20$.}
    \label{tab:aperiodic}
\end{table}

\section{Conclusion} \label{sec:fut}

In this paper we presented two algorithms to construct orientable sequences with asymptotically optimal length.  The first algorithm is a successor rule that outputs
each bit in $O(n)$ time using $O(n)$ space; the second algorithm generates the same sequences in $O(1)$-amortized time per bit using $O(n^2)$ space by applying a recent concatenation-tree framework~\cite{concat}.  This answers a long-standing open question by Dai, Martin, Robshaw, and Wild~\cite{Dai}.  
%
We conclude with the following directions for future research:


%
\begin{enumerate}
    \item Can the lower bound of $L_n$ for $\OS(n)$s  be improved?
    \item Can small strings be inserted systematically into our constructed $\OS(n)$s to obtain longer orientable sequences?  

\item 
    A problem closely related to efficiently generating long $\OS(n)$s is the problem of \emph{decoding} or \emph{unranking} orientable sequences. That is, given an arbitrary length-$n$ substring of an $\OS(n)$, efficiently determine where in the sequence this substring is located. 
There has been little to no progress in this area. Even in the well-studied area of de Bruijn sequences, 
only a few efficient decoding algorithms have been discovered. 
\end{enumerate}




\section{Acknowledgment}
Joe Sawada (grant RGPIN-2025-03961) and Daniel Gabri\'{c} (grant RGPIN-2026-04863) gratefully acknowledge research support from the Natural Sciences and Engineering Research Council of Canada (NSERC).

\bibliographystyle{unsrt}

\bibliography{abbrevs,refs}

@string{BAMS = "Bull. Amer. Math. Soc."}

@string{BM = "Bibliotheca Math."}

@string{DAM = "Disc. Appl. Math."}

@string{DM = "Discrete Math."}

@string{ELECJC = "Electronic J. Combinatorics"}

@string{IEEE-IT = "IEEE Trans. Inform. Theory"}

@string{IEEE-TC = "IEEE Trans. Comput."}

@string{IJFCS = "Internat. J. Found. Comp. Sci."}

@string{IPL = "Inform. Process. Lett."}

@string{TCS = "Theoret. Comput. Sci."}

@preamble{ "\newcommand{\noopsort}[1]{} "
	# "\newcommand{\singleletter}[1]{#1} " }

@ARTICLE{AlhakimOrient,
  title    = "Orientable sequences over non-binary alphabets",
  author   = "Alhakim, Abbas and Mitchell, Chris J and Szmidt, Janusz and Wild,
              Peter R",
  abstract = "We describe new, simple, recursive methods of construction for
              orientable sequences over an arbitrary finite alphabet, i.e.
              periodic sequences in which any sub-sequence of n consecutive
              elements occurs at most once in a period in either direction. In
              particular we establish how two variants of a generalised Lempel
              homomorphism can be used to recursively construct such sequences,
              generalising previous work on the binary case. We also derive an
              upper bound on the period of an orientable sequence.",
  journal  = "Cryptography and Communications",
  volume   =  16,
  number   =  6,
  pages    = "1309--1326",
  month    =  nov,
  year     =  2024
}

@article{concat,
 author = {J. Sawada and J. Sears and A. Trautrim and A. Williams},
 journal = DAM,
 pages = {1-24},
 title = {Concatenation trees: A framework for efficient universal cycle and de {B}ruijn sequence constructions},
 _volume = {},
 year = {2026},
_doi="10.1007/BF01442866",
note = {to appear}
}

@InProceedings{brace,
  author =	{Adamson, D. and Gusev, V. V. and Potapov, I. and Deligkas, A.},
  title =	{Ranking Bracelets in Polynomial Time},
  booktitle =	{32nd Annual Symposium on Combinatorial Pattern Matching (CPM 2021)},
  pages =	{4:1--4:17},
  series =	{Leibniz International Proceedings in Informatics (LIPIcs)},
  ISBN =	{978-3-95977-186-3},
  ISSN =	{1868-8969},
  year =	{2021},
  volume =	{191},
  editor =	{Gawrychowski, Pawe{\l} and Starikovskaya, Tatiana},
  publisher =	{Schloss Dagstuhl -- Leibniz-Zentrum f{\"u}r Informatik},
  address =	{Dagstuhl, Germany},
  _url =		{https://drops.dagstuhl.de/entities/document/10.4230/LIPIcs.CPM.2021.4},
  URN =		{urn:nbn:de:0030-drops-139554},
  doi =		{10.4230/LIPIcs.CPM.2021.4},
  annote =	{Keywords: Bracelets, Ranking, Necklaces}
}

@InProceedings{G&S-Orientable:2024,
  author =	{Gabri\'{c}, D. and Sawada, J.},
  title =	{Efficient Construction of Long Orientable Sequences},
  booktitle =	{35th Annual Symposium on Combinatorial Pattern Matching (CPM 2024)},
  pages =	{15:1--15:12},
  series =	{Leibniz International Proceedings in Informatics (LIPIcs)},
  year =	{2024},
  volume =	{296},
  editor =	{Inenaga, S. and Puglisi, S. J.},
  publisher =	{Schloss Dagstuhl -- Leibniz-Zentrum f{\"u}r Informatik}
}

@INPROCEEDINGS{BM,
  author={Burns, J. and Mitchell, C. J.},
  booktitle={Cryptography and Coding III (M.J.Ganley, ed.)}, 
  title={Position sensing coding schemes}, 
  year={1993},
  volume={},
  number={},
  pages={31--66},
   publisher = {Oxford University Press},
  }

@INPROCEEDINGS{Dai,
  author={Dai, Z. -D. and Martin, K. M. and Robshaw, M. J. B. and Wild, P. R.},
  booktitle={Cryptography and Coding III (M.J.Ganley, ed.)}, 
  title={Orientable sequences}, 
  year={1993},
  volume={},
  number={},
  pages={97--115},
   publisher = {Oxford University Press},
  }

@article{MW,
  author    = {C. J. Mitchell and
               P. R. Wild},
  title     = {Constructing Orientable Sequences},
  journal   = IEEE-IT,
  volume    = {68},
  number    = {7},
  pages     = {4782--4789},
  year      = {2022},
  _url       = {https://doi.org/10.1109/TIT.2022.3158645},
  doi       = {10.1109/TIT.2022.3158645},
  timestamp = {Tue, 28 Jun 2022 21:07:46 +0200},
  _biburl    = {https://dblp.org/rec/journals/tit/MitchellW22.bib},
  bibsource = {dblp computer science bibliography, https://dblp.org}
}

@article{Rampersad&Shallit:2003,
  author    = {N. Rampersad and
               J. O. Shallit},
  title     = {Words avoiding reversed subwords},
  journal   = {J. Combin. Math. Combin. Comput.},
  volume    = {54},
  pages     = {157--164},
  year      = {2005}
}

@book{Lothaire:1997,
place={Cambridge}, 
author={M. Lothaire},
edition={2}, 
series={Cambridge Mathematical Library}, 
title={Combinatorics on {W}ords}, 
publisher={Cambridge University Press}, 
pages={65},
year={1997}, 
collection={Cambridge Mathematical Library}}

@unpublished{Fleischer&Shallit:2019,
  author    = {L. Fleischer and J. O. Shallit},
  title     = {Words that avoid reversed factors, revisited},
  _url          = {http://arxiv.org/abs/1911.11704},
  note         = {Arxiv preprint arXiv:1911.11704 [cs.FL], available at \url{http://arxiv.org/abs/1911.11704}},
  year      = {2019}
}

@article{Currie&Lafrance:2016,
title = {Avoidability Index for Binary Patterns with Reversal
},
journal = ELECJC,
volume = {23},
number = {1},
pages = {1--14},
year = {2016},
author = {J. Currie and P. Lafrance},
note = {\#P1.36}
}

@article{Mercas:2017,
title = {On the aperiodic avoidability of binary patterns with variables and reversals},
journal = TCS,
volume = {682},
pages = {180--189},
year = {2017},
issn = {0304-3975},
doi = {10.1016/j.tcs.2016.12.022},
_url = {https://www.sciencedirect.com/science/article/pii/S0304397516307502},
author = {R. Merca\c{s}},
}

@ARTICLE{karyframework, 
author={D. {Gabri\'{c}} and J. {Sawada} and A. {Williams} and D. {Wong}}, 
journal=IEEE-IT,
title={A Successor Rule Framework for Constructing $k$-Ary de {B}ruijn Sequences and Universal Cycles}, 
year={2020}, volume={66}, number={1}, 
pages={679--687},
doi="10.1109/TIT.2019.2928292"
}

@ARTICLE{lempel, 
author={A. Lempel}, 
journal=IEEE-TC, 
title={On a Homomorphism of the de {B}ruijn Graph and its Applications to the Design of Feedback Shift Registers}, 
year={1970}, 
volume={C-19}, 
number={12}, 
pages={1204--1209}, 
keywords={de Bruijn graphs, design of binary nonsingular feedback shift registers, homomorphism.;de Bruijn graphs, design of binary nonsingular feedback shift registers, homomorphism.}, 
doi ={10.1109/T-C.1970.222859}, 
ISSN={0018-9340}, 
_month={Dec},}

@misc{oeis,
Title = "{OEIS} {F}oundation {I}nc. (2024), {T}he {O}n-{L}ine {E}ncyclopedia of {I}nteger {S}equences, \url{https://oeis.org}.",
author="{N}. {J}. {A}. {S}loane et al."
}

@article{binframework,
title = "A Framework for Constructing De {B}ruijn Sequences Via Simple Successor Rules",
journal = DM,
volume ={241},
number={11},
year = "2018",
pages={2977--2987},
author = "D. Gabri\'{c} and J. Sawada and A. Williams and D. Wong",
doi="10.1016/j.disc.2018.07.010"
}

@article{Brlek&Hamel&Nivat&Reutenauer:2004,
author = {Brlek, S. and Hamel, S. and Nivat, M. and Reutenauer, C.},
title = {On the palindromic complexity of infinite words},
journal = IJFCS,
volume = {15},
number = {02},
pages = {293--306},
year = {2004},
doi="10.1142/S012905410400242X"
}

@ARTICLE{Booth,
    author = {K. S. Booth},
    title = {Lexicographically least circular substrings},
    journal = IPL,
    year = {1980},
    volume = {10},
    number = {4/5},
    pages = {240--242},
    doi="10.1016/0020-0190(80)90149-0"
}

@article{martin,
 author = {M. H. Martin},
 title = {A problem in arrangements},
 journal = BAMS,
 volume = {40},
  number = {12},
 pages = {859--864},
 year = {1934}
 }

\end{document}